\newcolumntype{L}[1]{>{\raggedright\let\newline\\\arraybackslash\hspace{0pt}}m{#1}}
\newcolumntype{C}[1]{>{\centering\let\newline\\\arraybackslash\hspace{0pt}}m{#1}}
\newcolumntype{R}[2]{%
    >{\adjustbox{angle=#1,lap=\width-(#2)}\bgroup}%
    l%
    <{\egroup}%
}
\theoremstyle{definition}
\newtheorem{definition}{Definition}
\newcommand{\descr}[1]{\smallskip \noindent \textbf{#1}}
\newcommand{\descrit}[1]{\smallskip \noindent \textit{#1}}
\newcommand{\etal}{et al.\xspace}
\newcommand{\ie}{\text{i.e.,\,}}
\newcommand{\eg}{\text{e.g.,\,}}
\newcommand{\ICRC}{\text{ICRC}}
\newcounter{statement}
\newtheorem{theorem}{Theorem}
\theoremstyle{remark}
\def\addvalue#1#2{\expandafter\gdef\csname my@data@#1\endcsname{#2}}%
\def\usevalue#1{\csname my@data@#1\endcsname}
\newcommand{\reqdef}[3]{%
\addvalue{#2}{RQ.#1}%
\hypertarget{#2}\noindent\textit{\usevalue{#2}: #3.}
}
\newcommand{\reqlinky}[1]{\hyperlink{#1}{\usevalue{#1}}}
\newcommand{\code}[1]{\textsf{#1}}
\newcommand{\AEnc}{\code{AE}\xspace}
\newcommand{\KGen}{\code{KGen}}
\newcommand{\Enc}{\code{Enc}}
\newcommand{\Dec}{\code{Dec}}
\newcommand{\DS}{\code{DS}\xspace}
\newcommand{\Sign}{\code{Sign}}
\newcommand{\Verif}{\code{Verif}}
\newcommand{\Com}{\code{Com}\xspace}
\newcommand{\Gen}{\code{Gen}}
\newcommand{\Commit}{\code{Commit}}
\newcommand{\MAC}{\code{MAC}\xspace}
\newcommand{\Tag}{\code{Tag}}
\newcommand{\ORAM}{\code{ORAM}\xspace}
\newcommand{\Init}{\code{Init}}
\newcommand{\Read}{\code{Read}}
\newcommand{\Write}{\code{Write}}
\newcommand{\Serve}{\code{Serve}}
\newcommand{\mess}{\ensuremath{m}\xspace}
\newcommand{\cipher}{$c$\xspace}
\newcommand{\N}{$N$\xspace}
\newcommand{\spend}{\ensuremath{\code{spent}}\xspace}
\newcommand{\spendseen}{\ensuremath{\code{spent}_{\code{seen}}}}
\newcommand{\spendmax}{\ensuremath{\code{spent}_{\code{max}}}}
\newcommand{\spendsumsim}{\ensuremath{\code{spent}_{\code{sum}}}\xspace}
\newcommand{\spendsum}{\ensuremath{\code{spent}^{*}_{\code{sum}}}\xspace}
\newcommand{\spendsumz}{\ensuremath{\code{spent}^{*0}_{\code{sum}}}\xspace}
\newcommand{\spendsumo}{\ensuremath{\code{spent}^{*1}_{\code{sum}}}\xspace}
\newcommand{\valid}{\ensuremath{\code{valid}}}
\newcommand{\myb}{\ensuremath{\code{b}}}
\newcommand{\myst}{\ensuremath{\code{st}}\xspace}
\newcommand{\success}{\ensuremath{\code{success}}\xspace}
\newcommand{\db}{\ensuremath{\code{DB}}\xspace}
\newcommand{\price}{\ensuremath{\code{price}}\xspace}
\newcommand{\amountrec}{\ensuremath{\code{amount\_received}}\xspace}
\newcommand{\amountspent}{\ensuremath{\code{amount\_spent}}\xspace}
\newcommand{\world}{\ensuremath{\code{world}}\xspace}
\newcommand{\bud}{\ensuremath{\code{bud}}\xspace}
\newcommand{\ctr}{\ensuremath{\code{ctr}}\xspace}
\newcommand{\Hid}{\ensuremath{\code{id}_{\code{H}}}\xspace}
\newcommand{\tid}{\ensuremath{\code{t}_{\code{id}}}\xspace}
\newcommand{\tidX}[1]{\ensuremath{\code{t}_{\code{id}_{#1}}\xspace}}
\newcommand{\tnb}{\ensuremath{\code{t}_{\code{nb}}}\xspace}
\newcommand{\aggproof}{\ensuremath{\pi_{\code{recl}}}\xspace}
\newcommand{\bal}{\ensuremath{\code{balance}}\xspace}
\newcommand{\counter}{\ensuremath{\code{counter}}\xspace}
\newcommand{\myM}{\ensuremath{\mathcal{M}}\xspace}
\newcommand{\myTH}{\ensuremath{\mathcal{T}_H}\xspace}
\newcommand{\inlog}{\ensuremath{\code{ReceivedTrans}}}
\newcommand{\spenttrans}{\ensuremath{\code{SpentTrans}}}
\newcommand{\Bud}{\ensuremath{\code{Bud}}}
\newcommand{\SMap}{\ensuremath{\code{CardSMap}}}
\newcommand{\prfk}{\code{K}}
\newcommand{\oraclesym}{\ensuremath{\mathcal{O}}}
\newcommand{\skt}{{\color{violet}{\ensuremath{\sk\textsubscript{T}}}}\xspace}
\newcommand{\pkt}{{\color{violet}{\ensuremath{\pk\textsubscript{T}}}}\xspace}
\newcommand{\sks}{{\color{orange}{\ensuremath{\sk\textsubscript{RS}}}}\xspace}
\newcommand{\pks}{{\color{orange}{\ensuremath{\pk_{\textrm{RS}}}}\xspace}}
\newcommand{\sksig}{{{\ensuremath{\sk}}}\xspace}
\newcommand{\pksig}{{{\ensuremath{\pk}}}\xspace}
\newcommand{\larrow}{\ensuremath{\leftarrow}\xspace}
\newcommand{\secuparam}{\ensuremath{1^\ell}\xspace}
\newcommand{\params}{\ensuremath{\code{params}}\xspace}
\newcommand{\oraclehreg}{\ensuremath{\oraclesym_{\textrm{HReg}}(\sks, \cdot)}\xspace}
\newcommand{\oraclemureg}{\ensuremath{\oraclesym_{\textrm{MUserReg}}(\sks, \cdot)}\xspace}
\newcommand{\oraclemsreg}{\ensuremath{\oraclesym_{\textrm{CStationReg}}(\cdot)}\xspace}
\newcommand{\oracleregtwo}{\ensuremath{\oraclesym_{\textrm{RegSplitWorld}}(\cdot, \sks, \cdot)}\xspace}
\newcommand{\oraclespend}{\ensuremath{\oraclesym_{\textrm{Spend}}(\cdot)}\xspace}
\newcommand{\oraclespendmu}{\ensuremath{\oraclesym_{\textrm{SpendMalUser}}(\cdot)}\xspace}
\newcommand{\oraclespendmv}{\ensuremath{\oraclesym_{\textrm{SpendMalVendor}}(\cdot)}\xspace}
\newcommand{\oraclespendtwo}{\ensuremath{\oraclesym_{\textrm{SpendSplitWorld}}(\cdot)}\xspace}
\newcommand{\oraclehregargs}{\ensuremath{\oraclesym_{\textrm{HReg}}}\xspace}
\newcommand{\oraclemuregargs}{\ensuremath{\oraclesym_{\textrm{MUserReg}}}\xspace}
\newcommand{\oraclemsregargs}{\ensuremath{\oraclesym_{\textrm{CStationReg}}}\xspace}
\newcommand{\oracleregtwoargs}{\ensuremath{\oraclesym_{\textrm{RegSplitWorld}}}\xspace}
\newcommand{\oraclespendargs}{\ensuremath{\oraclesym_{\textrm{Spend}}}\xspace}
\newcommand{\oraclespendmuargs}{\ensuremath{\oraclesym_{\textrm{SpendMalUser}}}\xspace}
\newcommand{\oraclespendmvargs}{\ensuremath{\oraclesym_{\textrm{SpendMalVendor}}}\xspace}
\newcommand{\oraclespendtwoargs}{\ensuremath{\oraclesym_{\textrm{SpendSplitWorld}}}\xspace}
\newcommand{\expbase}[1]{\ensuremath{\code{Exp}^{\text{#1}}_{\adv}}\xspace}
\newcommand{\expbaseb}[2]{\ensuremath{\code{Exp}^{\text{#1}}_{\adv, #2}}\xspace}
\newcommand{\expind}[1][b]{\expbaseb{IND}{#1}}
\newcommand{\expsec}{\expbase{SEC}}
\newcommand{\expred}{\expbase{RECL}}
\newcommand{\expaud}[1][b]{\expbaseb{AUDP}{#1}}
\newcommand{\Spend}{\ensuremath{\code{Spend}}\xspace}
\newcommand{\Receive}{\ensuremath{\code{Receive}}}
\newcommand{\GenProof}{\ensuremath{\code{CreateReclaimProof}}}
\newcommand{\VerifyProof}{\ensuremath{\code{VerifyReclaimProof}}}
\newcommand{\VerifyRedProof}{\ensuremath{\code{VerifyRedeemProof}}}
\newcommand{\Request}{\ensuremath{\code{Request}}}
\newcommand{\Allocate}{\ensuremath{\code{Allocate}}}
\newcommand{\TrustedSetup}{\ensuremath{\code{TrustedSetup}}}
\newcommand{\SetupSign}{\ensuremath{\code{SetupRSKeys}}}
\newcommand{\SetupToken}{\ensuremath{\code{SetupCard}}}
\begin{document}
%
% paper title
\title{A Low-Cost Privacy-Preserving Digital Wallet for Humanitarian Aid Distribution}% the Humanitarian Sector}

% author names and affiliations
% use a multiple column layout for up to three different
% affiliations
%\author{}

\author{
    \IEEEauthorblockN{Eva Luvison\IEEEauthorrefmark{1,3}, Sylvain Chatel\IEEEauthorrefmark{1}, Justinas Sukaitis\IEEEauthorrefmark{2}, Vincent Graf Narbel\IEEEauthorrefmark{2}, Carmela Troncoso\IEEEauthorrefmark{3}, Wouter Lueks\IEEEauthorrefmark{1}}
    \IEEEauthorblockA{\IEEEauthorrefmark{1}CISPA Helmholtz Center for Information Security, Saarbrücken, Germany
    \\\{eva.luvison, sylvain.chatel, lueks\}@cispa.de}
    \IEEEauthorblockA{\IEEEauthorrefmark{2}International Committee of the Red Cross, Geneva, Switzerland
    \\dpo@icrc.org}
    \IEEEauthorblockA{\IEEEauthorrefmark{3} SPRING Lab, EPFL, Laussanne, Switzerland
    \\carmela.troncoso@epfl.ch}
}

% make the title area
\maketitle

\begin{abstract}
Humanitarian organizations distribute aid to people affected by armed conflicts or natural disasters. Digitalization has the potential to increase the efficiency and fairness of aid-distribution systems, and recent work by Wang et al. has shown that these benefits are possible without creating privacy harms for aid recipients. However, their work only provides a solution for one particular aid-distribution scenario in which aid recipients receive a pre-defined set of goods. Yet, in many situations it is desirable to enable recipients to decide which items they need at each moment to satisfy their specific needs. We formalize these needs into functional, deployment, security, and privacy requirements, and design a privacy-preserving digital wallet for aid distribution. Our smart-card-based solution enables aid recipients to spend a pre-defined budget at different vendors to obtain the items that they need. We prove our solution's security and privacy properties, and show it is practical at scale.
\end{abstract}

\section{Introduction}
\label{sec:introduction}
%% WHAT DO HUMANITARIAN AID ORGANIZATIONS DO
Humanitarian organizations provide assistance to people affected by extreme circumstances such as conflicts, famine, or natural disasters. One core aspect of assistance is to distribute aid in the form of food and other essential items~\cite{icrc2019ecosec}. While traditionally these distribution processes were analog, humanitarian organizations are increasingly considering the adoption of digital technologies to increase efficiency and accountability, which in turn enables them to reach as many people as possible with limited resources.

%% THE NEED FOR PRIVACY
The introduction of digital technologies, however, brings with it the potential for harm to recipients of aid~\cite{KaspersenICRC16}. Since recipients have very little choice in refusing aid, it is essential that aid distribution digitalization is designed to prevent such harms~\cite{BurtonICRC20}. Unfortunately, existing technological solutions often do not protect the privacy of recipients, and the collection of personal data can introduce risks for recipients such as prosecution~\cite{ciesielski2022afghans}, or losing citizenship~\cite{aljazeera2019kenya}. %and the unique constraints of humanitarian operations make the development of privacy-friendly solutions challenging.

%% CHALLENGING DEPLOYMENT CONSTRAINTS
Wang \etal~\cite{wang2023digital} were the first to identify the challenges of aid distribution digitalization. The deployment constraints make designing a (privacy-preserving) solution difficult. First, aid distribution is usually assigned \emph{per household} requiring multiple household members to be able to access a shared amount of aid. Second, solutions must often be low-tech: recipients cannot always be assumed to have high-performance hardware, and the existence of digital communication between parties is not guaranteed. To address these challenges, Wang \etal proposed a token-based system where household members receive a smart card or use a smartphone app. Once per distribution round, one of the household members can then use their card (or phone) to obtain a fixed aid package.

%% THE NEED FOR AGENCY || ECONOMY
Our interactions with the International Committee of the Red Cross (ICRC; a large humanitarian organization) revealed that distributing fixed aid packages is not always the preferred choice. Fixed-package distribution can be deployed quickly and therefore works well as a stop-gap solution in emergency situations. In other situations, however, humanitarian organizations prefer to assign recipients a budget that they can then \emph{flexibly} spend at vendors of their choice. This has two advantages. First, it increases the agency of aid recipients by letting them procure the items \emph{they know they need}, rather than giving them the items the organization has decided they need. Second, letting recipients obtain goods from local vendors supports the creation of a local economy and ecosystem.

%% GIVING CASH DOESN'T WORK
The obvious solution of simply giving cash to aid recipients often does not suffice. Distributing cash is time-consuming and error-prone~\cite{ifrc2021identity}. Additionally, transporting and storing large amounts of cash -- as it is needed to support potentially hundreds of families for long periods -- in conflict zones puts humanitarian personnel in danger. Hence, there is a need for a digital wallet solution that operates independently of existing monetary infrastructures.

%% COLLAB WITH ICRC TO ADDRESS THEIR NEEDS
We collaborated closely with the ICRC to better understand the requirements of a \emph{humanitarian-oriented digital wallet}. Such wallets would be used in mid- or long-term aid distribution campaigns in one of two scenarios. Either they are deployed in a setting where recipients have access to high-end devices and good connectivity -- such as the recent armed conflict in Ukraine; or recipients have no guaranteed access to programmable powerful devices, or scarce connectivity -- such as conflicts in sub-Saharan Africa. In this paper, we focus on the second scenario, in which recipients cannot use their own devices and humanitarian organizations must provide them with a digital support for their wallets. To keep the cost reasonable, we choose as support \textit{smart cards}. Designing digital wallets based on smart cards is challenging in the aid distribution scenario because \emph{multiple smart cards} must be able to access the same household budget, yet these cards cannot directly communicate with each other to share the latest balance. In our solution we use techniques from oblivious memory access to hide which card (or household) is making transactions.

%% BUT USING WALLETS MAKES EVERYTHING SUPER HARD
The synchronization and privacy requirements also rule out many other existing wallet-like solutions. Designs based on blockchain and cryptocurrencies do not provide the expected privacy requirements~\cite{RinbergA22SoKanonymity}; depend too much on high connectivity between tokens; require high-tech technologies or heavy computations~\cite{AlmashaqbehS22SoKblockchain} to be deployable; or in the case of e-cash still require privately synchronizing which coins have been spent. The custom solution of Wang \etal~\cite{wang2023digital} on the other hand is privacy-friendly and deployable, but cannot be converted into a wallet-based system: the assumption that in every round households always withdraw the entire aid budget in full at a single distribution point is baked into their design.

In this paper, we make the following contributions:
\begin{itemize}[leftmargin=*]
\item We present the functional, security, and privacy requirements as well as deployment constraints of wallet-based aid distribution in humanitarian scenarios when high-end user devices are not available. One particularly challenging constraint is that recipient-held smart cards cannot communicate with each other.
\item We present a smart card-based wallet system that addresses these security, privacy, and functionality requirements. Tokens do not communicate with each other, but instead, privately synchronize their state via the vendors without harming recipient privacy.
\item We formally model the security and privacy properties that a wallet-based aid distribution system should satisfy. In particular, we model how privacy is maintained even though tokens of different users might now hold different balances. We prove that our new proposal satisfies all security and privacy requirements.
\item We implemented a prototype of our system to show that our designs are practical for medium-scale aid distribution projects.
\end{itemize}

\section{Problem and Solution Overview}
\label{sec:pb}
We first present our system and threat models and describe the problem. Then, we sketch an overview of our solution and discuss why previous works cannot address the problem.

\begin{figure}[t]
    \centering
    \resizebox{0.9\columnwidth}{!}{
    \input{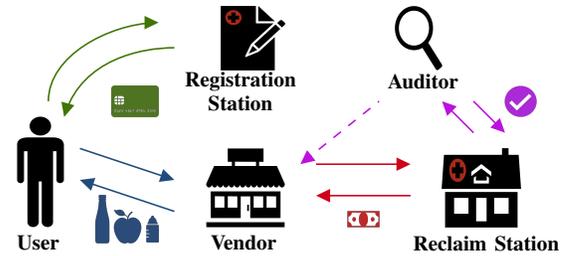}
    }
    \captionsetup{font=small} 
    \caption{Workflow of our digital wallet. The user interacts with the registration station which allocates a budget and returns a smart card. The user transacts with a vendor who later contacts the reclaim station to reclaim the transaction. An auditor can check the correctness of this reclaim.}
    \label{fig:small-overview}
\end{figure}

\subsection{System Model}\label{sec:pb:model}
We consider an aid distribution scenario shown in Fig~\ref{fig:small-overview}. As in the work of Wang \etal~\cite{wang2023digital}, aid recipients are part of a household, and each household is allocated a specific budget that limits what recipients in that household can spend. Households register with the registration station that determines eligibility and allocates their budget. At the end of the registration process, a subset of household members is handed a smart card, which provides tamper-resistant hardware and secure storage~\cite{ShepherdAGLMASC16}.

Contrary to the work by Wang \etal, however, recipients can spend their budget at different vendors. And they do not have to spend their budget all in one go. Vendors are typically independent and are not operated by a humanitarian organization. After one or more transactions, a vendor can reclaim the money by interacting with the reclaim station operated by the humanitarian organization. We follow Wang \etal's requirement for auditability to ensure humanitarian funds are well spent. In our system, however, the auditor audits the reclaim station, and thereby indirectly the vendors.

\descr{Key challenge: household-shared budgets.} All household members can access the same budget in the same way that several bank cards can access the same shared bank account. Such centralized banking systems, however, do not satisfy the \ICRC's strict privacy needs. In particular, due to the requirements of financial, anti-terrorist, and anti-fraud regulations, banks need to collect a large amount of information to check that customers are not on government lists, and if they are, report them.
The requirement of sharing a budget together with the requirement that tokens do \emph{not} communicate rules out many other obvious solutions. Electronic-cash-based solutions and privacy-friendly distributed ledgers require tokens to be able to synchronize state to know which coins can still be spent. Pre-paid cards and analog vouchers cannot be duplicated or shared among household members. And, while digital vouchers are easy to copy and share, this approach results in privacy harm if two members spend the same voucher.

\descr{Parties.}
Our system involves several actors:

    \descrit{Trusted party:} that builds and initiates the tokens. 
    
    \descrit{Users:} that receive aid. They are grouped by household. A user is \textit{legitimate} if the \ICRC\ registered and allocated a budget to their household. Such users will be issued a personal token. Otherwise, they are \textit{illegitimate} and cannot register themselves. A user can belong to one and only one household. 
    
    \descrit{Registration Station:} that enrolls the legitimate users and allocates a budget to their households. The registration station is usually run by the \ICRC\ staff.
    
    \descrit{Vendors:} that are approved by the \ICRC\ are able to sell goods to legitimate users via our system.
    
    \descrit{Reclaim Station:} that the vendors interact with to reclaim the amount for which they have sold goods. The reclaim station uses transaction records provided by the vendor to verify the amount. The reclaim station is run by the \ICRC\ staff.
    
    \descrit{Auditors:} that can audit the records of the reclaim station to verify that the amount of money the reclaim station paid to vendors corresponds to transactions by legitimate users. They use the transaction records.

\descr{Threat Model}
We consider that the trusted party is always honest. For the other actors, their capabilities depend on their role in the protocol. 

We assume that the users are malicious (i.e., they might cheat when registering or transacting with the vendor). Yet, they still want to keep their privacy.

We assume that the registration station, the reclaim station, and vendors are honest with respect to the operations inherent to their roles in the system: the registration station honestly registers users, the reclaim station lets legitimate vendors reclaim their money; and vendors honestly hand over the goods that recipients paid for.

Vendors and the reclaim station can be malicious with respect to money. Vendors might collude with users to try to reclaim more money than what they transacted. The reclaim station and vendors might try to convince an auditor that money went to legitimate recipients when it did not.

With respect to privacy, vendors, the registration station, the reclaim station, and auditors can behave adversarially to break the user's privacy (i.e., to obtain details about transactions; or to distinguish between users). Since cards cannot share state, malicious vendors could execute roll-back attacks on the database to try to distinguish users. We therefore aim to achieve privacy against covert parties only.

Throughout this paper, we assume that tokens communicate with the registration station and vendors through authenticated communication channels~\cite{UngerDBFPG015SoKmessaging}. 

\subsection{Requirements}\label{sec:pb:req}
We now lay down the requirements for a secure and privacy-preserving wallet-based aid distribution digital system. We build on top of prior work for fixed-distribution systems~\cite{wang2023digital} but refine all requirements to ensure compatibility with a wallet-based approach. For completeness, we list all requirements here but highlight (using ***) new and significantly changed requirements.

\subsubsection{Functional requirements}
The system must have the following functionalities:

\reqdef{F1}{house}{Allocation per household} Aid can be given per individual or per household. In this work, we focus on a per-household allocation of the budget.

\reqdef{F2}{budget}{Budget allocation***} Households receive a budget that they can spend at different vendors. Here we differ from Wang et al.~\cite{wang2023digital} who instead assume a fixed periodic entitlement that is obtained from one distribution station.

\reqdef{F3}{period}{Periodical budget} The budget is allocated per period. At the beginning of a new period, the balance is updated to the initial allocated budget. This is done automatically without the need to register again. 
Without loss of generality, the remaining of the paper focuses on a single period and we present in Section~\ref{sec:disc} how to extend our approach to multiple periods. 

\reqdef{F4}{edit}{Editable tokens} As a household's situation can vary (e.g., deaths or births) over time,  the \ICRC\ must be able to modify the entitlement of a household. 

\reqdef{F5}{Vreceipt}{Accountability for vendor***} The vendor needs to keep track of their sales for their own accounting. Hence, the system should print an invoice for the vendor at each purchase.

\reqdef{F6}{proof}{Proof of sales***} To provide a vendor with the reimbursement they ask for, the reclaim station needs proof that the vendor sold goods for that amount of money. We assume that vendors can only reclaim funds during a fixed amount of time called the \textit{reclaim period}. 

\subsubsection{Deployment requirements}
The system must respect the following field constraints:

\reqdef{D1}{robust}{Robustness}
Being able to receive aid is essential for households. Therefore we follow Wang et al.~\cite{wang2023digital} in requiring robustness. Concretely, households should still be able to use their budget even when the head of household is unavailable, or when one of the households member's smart cards has been lost or damaged. To ensure availability of aid in these cases, we require that households can obtain more than one smart card, all of which should be able to access the same budget.

\reqdef{D2}{limited}{Constrained hardware***} 
In this paper, we focus on the setting where users do not (all) have (smart) phones, and households must thus rely on smart cards that they have been issued. Smart cards have only limited computation (ruling out heavy cryptography such as SNARKS, FHE, and zero-knowledge proofs) and communication capabilities; and in particular, cannot communicate with each other.

We do assume vendors and registration stations can communicate. However, smart cards of household members cannot benefit from this infrastructure to relay data between cards to enable them to synchronize state between them directly: It is very unlikely that they are connected at the same time. Instead, the challenge is to leverage this communication infrastructure to build an \emph{asynchronous} synchronization method, which is what our solution achieves.

\reqdef{D3}{scalability}{Scalability}
Humanitarian organizations organize aid in programs that range from assisting a few hundred households to assisting hundreds of thousands of households. The system should scale to be efficient for all these sizes.

\subsubsection{Security requirements}
Digital aid distribution systems should help ensure the funds allocated for aid are reaching the eligible households. We capture this in the following requirements.

\reqdef{S1}{overspending}{Overspending prevention***} Users should not be able to spend more than the allocated budget for their household. This requirement is particularly challenging as different household members share one budget, thus requiring synchronization. When users or their households were never registered, they should not be able to spend any budget.

\reqdef{S3}{over-reclaim}{Vendor over-reclaim***} The vendor should not be able to lie about the amount of goods they have delivered and reclaim a higher total. 

\reqdef{S4}{auditsec}{Auditability***} Humanitarian organizations' budget relies heavily on external donors. For these donations to continue, organizations need to prove they used these donations appropriately. Therefore, auditors should be able to review and verify the distribution process expenses. In particular, the auditor should be able to verify that the amount of money reclaimed by vendors from the reclaim station corresponds to the total amount of legitimate transactions at these vendors.

\subsubsection{Privacy requirements}
Finally, users' privacy should be protected.

\reqdef{P1}{unlinkability}{Privacy at purchase}The vendor should not learn anything about the user: name, other people in their household, allocated budget, remaining balance, previous purchase, etc. The vendor can only learn the legitimacy of the user and whether they have enough budget left to buy the goods. In particular, transactions should be unlinkable even when households spend their budget in multiple transactions. We assume the registration station is honest but curious.

As we explain in Section~\ref{sec:privacy-unlink}, without any communication between a household's cards, the best we can hope for is to achieve this property against a covert adversary. A fully malicious adversary can rewind the central balance store, thus making it possible to distinguish users. But cards can detect this rewinding, see Section~\ref{sec:disc}.

\reqdef{P2}{reclaim}{Privacy at reclaim} The reclaim station should not learn anything about the users except the total amount that the vendor is reclaiming, and whether this amount is the total sum of transactions performed by registered household members.

\reqdef{P3}{audit}{Privacy at auditing} Similarly, the auditor should not learn anything about the users, except the sum of all legitimate transactions.

\subsection{Solution Sketch}\label{sec:pb:solsketch}
We now describe an overview of our system. We distinguish five phases: setup, registration, transaction, reclaim, and audit. Figure~\ref{fig:overview} shows an overview of these phases. We refer to Section~\ref{sec:sol:syntax} for a detailed description of the syntax.

\descr{Setup.} The trusted party generates the cryptographic material for an encrypted database of per-household balances. Every smart card is initialized with the private key material to access this database.\footnote{The secure storage property of the cards ensures that this key remains private.} The registration station and vendors can access this database.

\descr{Registration.} The household members who wish to receive a smart card go to the registration station at the same time. When the household is eligible, the station assigns the household a budget and produces a smart card for each member. One of the tokens writes the initial budget to the database.

\descr{Transaction.} To perform a transaction with a registered vendor, the user first unlocks it. The card then interacts with the database and obliviously retrieves the household's balance. If the balance is insufficient, the process aborts. Otherwise, the token obliviously writes the new balance to the database and generates a \textit{proof of payment}. The proof contains a homomorphic commitment to the amount spent and is signed by the card.

\descr{Reclaim.} Periodically, the vendor can reclaim the funds for the goods they provided to users. The vendor contacts the reclaim station and using a similar technique as in Wang et al.~\cite{wang2023digital} proves that the amount is correct. The vendor submits all proofs of payments, and opens the homomorphic sum of the individual commitments. The reclaim station can check that the sum and the individual transactions are correct; if they are, the reclaim station refunds the total amount to the vendor.

\descr{Audit.} The reclaim station, in turn, can relay the proofs of payments of all vendors to convince an auditor that the amount of money reclaimed by vendors corresponds to legitimate transactions by members of eligible households.

\subsection{Related Work}\label{sec:pb:rw}

The topic of digital currencies has been a prolific line of research for the last decades. 
The emergence of scalable and efficient constructions enabled the development of different approaches such as e-cash, blockchains, and digital banks. Yet, none of these solutions match the stringent requirements for aid distribution laid out by the \ICRC. 

E-cash was introduced by David Chaum to enable privacy through untraceable payments~\cite{Chaum82ecash}. It enables clients to withdraw signed electronic coins from a bank and spend them at merchants. The merchant then checks with the bank if the transaction is valid or not. 
This protocol comes with two main issues: (i)~it uses computationally heavy cryptography (\eg RSA-based blind signatures) and (ii)~it needs to check for double-spending at each coin transaction.
Several extensions were proposed to enable additional guarantees such as additional privacy guarantees~\cite{ChaumFN88ecash}, regulation and accountability options~\cite{camenisch2006balancing}, transferrable coins~\cite{DBLP:conf/pkc/BaldimtsiCFK15, DBLP:conf/acns/CanardG08}, and memory/communication optimizations~\cite{camenisch2005compact, DBLP:conf/pkc/CanardPST15}. Unfortunately, all these solutions are still impractical for our use case which requires the use of low-tech hardware and low communication cost. On top of the high computational cost of e-cash, in our setting, different tokens of a household (see \reqlinky{robust}) must share the state of the household budget (\ie the current set of unspent coins/balance). To maintain privacy (\reqlinky{unlinkability}) this sharing needs to be done obliviously. To minimize sharing costs, we directly share the available balance (only 3 bytes) instead of sharing more information about coins.

Blockchain-based solutions such as Zerocash~\cite{DBLP:conf/sp/Ben-SassonCG0MTV14} and Zerocoin~\cite{MiersG0R13} offer an account-based alternative to the token-based digital currencies. Such solutions rely on distributed ledgers to enable private and unlinkable transactions. Yet, such approaches have a large computing and communication cost for the clients and can have scalability issues: e.g., every transaction has to be downloaded and decrypted by the clients, clients need to construct complicated zero-knowledge proofs, and clients still need to share the new coins containing the remaining balances (which are 100s of bytes big). Central bank digital currencies (CBDC) construction can combine both approaches for efficiency and functionality purposes~\cite{DBLP:conf/ndss/DanezisM16, DBLP:conf/fc/WustKCC19, DBLP:conf/ccs/WustKDC22, DBLP:conf/ccs/KiayiasKS22}. 

Neither of these two approaches is concretely efficient enough for resource-constrained devices (\reqlinky{limited}) because spending is expensive. Related work reports $2$min for Zcash~\cite{DBLP:conf/sp/Ben-SassonCG0MTV14} on an Intel i7 and $0.8$s for Platypus~\cite{DBLP:conf/ccs/WustKDC22} on an iPhone 13 mini. Since smart cards are orders of magnitude slower, we estimate run times of several minutes on a smart card, assuming access to non-commodity cards that provide support for cryptographic building blocks such as pairings needed for Zcash and Platypus. Instead, we only require a single standard digital signature (ECDSA) per transaction.

Closer to our work about the digitalization of humanitarian organization processes, two projects stand out for aid distribution in crisis zones: the Partisia project~\cite{icrc2023partisia,goldie2022partisia} and the Building Blocks project~\cite{wfp2023buildingblocks}. Unfortunately, documentation on these projects is scarce. These projects use smartphones and blockchain technologies. 
As such, these projects consider different system models compared to ours making comparisons difficult. 
Wang \etal recently proposed a digital aid-distribution system~\cite{wang2023digital}. Yet, their solution does not offer the flexibility provided by our digital wallet. In their work, the user can only retrieve a fixed amount of goods in one transaction. 
Additionally, their work can only detect double-spending by different household members, but it cannot prevent it. Instead, they implicitly assume household members can synchronize about whether they already picked up this month's aid or not. This simplification makes a lot of sense in the fixed-distribution case but does not when spending parts of budgets at different vendors.

\section{Our Solution}
\label{sec:sol}
We now describe our solution.

\subsection{Cryptographic Building Blocks}\label{sec:sol:prelim}
\descr{Notation.}
We write $\code{outp} \leftarrow \code{Alg(inp)}$ to denote an algorithm \code{Alg} run by one party taking as input \code{inp} and returning \code{outp}. We write $\code{\{outp1, outp2\}} \leftarrow \langle\code{Alg1(inp1), Alg2(inp2)}\rangle$ to denote an interactive protocol between two parties. The first party runs \code{Alg1} which takes as input \code{inp1} and returns \code{outp1}. The second party runs \code{Alg2} which takes as input \code{inp2} and returns \code{outp2}. The two algorithms are interactive.
If an algorithm can fail, we write its output \code{outp1/outp2} where \code{outp1} is its output in case of success and \code{outp2} is its output in case of failure. If an algorithm does not have outputs, we denote by $\top$ its success and by $\perp$, its failure.
We write $\langle \code{Alg1} \leftrightarrow \code{Alg2} \rangle$ to denote an interactive protocol where the parties run \code{Alg1} and \code{Alg2} respectively.

We build our construction by combining different cryptographic building blocks that we introduce now. 

\descr{Digital signature scheme.} Let \DS be a digital signature scheme s.t. $\DS\xspace {:=} (\KGen, \Sign, \Verif)$. The key generation algorithm $(\sksig, \pksig) {\larrow} \DS.\KGen(\secuparam)$ takes as input a security parameter $\ell$ and outputs a pair of keys $(\sksig, \pksig)$. The signing algorithm $\sigma {\larrow} \DS.\Sign(\sksig, \mess)$ takes as input a secret key $\sksig$ and a message $\mess \in \{0,1\}^*$ and outputs a signature $\sigma$. The verification algorithm $\top/\perp \larrow \DS.\Verif(\pksig, \mess, \sigma)$ outputs a success $\top$ or an error $\perp$.

\descr{Homomorphic Commitment.} Let \Com be the additively homomorphic Pedersen commitment scheme~\cite{pedersen1991commit} s.t. $\Com\xspace {:=} (\Gen, \Commit)$. The parameters generation algorithm $(\GG, q, g, h) {\larrow} \Com.\Gen(\secuparam)$ takes as input the security parameter $\ell$ and outputs parameters $\params {=} (\GG, q, g, h)$, where $\GG$ is a cyclic group of prime order $q$ generated by $g$ and $h$ is another generator of $\GG$. The commitment algorithm $\cipher {\larrow} \Com.\Commit(\mess, r)$ takes as input a message $\mess$ to commit and a random input $r$, and outputs a commitment $\cipher = g^{\mess} h^r$.

\descr{Message Authentication Code.} Let \MAC be a message authentication code s.t. $\MAC {:=} (\KGen, \Tag, \Verif)$. The key generation algorithm $\code{k} {\larrow} \MAC.\KGen(\secuparam)$ takes as input a security parameter $\ell$ and outputs a key $\code{k}$. The tag algorithm $\tau {\larrow} \MAC.\Tag(\code{k}, \mess)$ takes as input the secret key $\code{k}$ and a message $\mess$ and outputs $\tau$. The verification algorithm $\top/\perp {\larrow} \MAC.\Verif(\code{k}, \mess, \tau)$ takes as input the secret key $\code{k}$, the message $\mess$, and the tag $\tau$ and outputs a success or error symbol $\top/\perp$.\\

\descr{Oblivious RAM.} An oblivious RAM scheme enables a client to access a database stored by an external server while hiding its access patterns~\cite{Goldreich87oram}. Let \ORAM be an Oblivious RAM protocol as $\ORAM\xspace {:=} (\Init, \Read, \Write, \Serve)$. 
We assume the ORAM has integrity protection, i.e., readers can detect if the database has been modified.
The initialization algorithm $(\sk, \db) \larrow \ORAM.\Init(\secuparam, N)$ takes as input the security parameter $\ell$ and a maximal number of stored blocks $N$ and outputs a key $\sk$ and an initialized $\db$. The $\Read$ algorithm is performed by the client and interacts with the $\Serve$ algorithm run by the server $\{\code{data} / \bot, \db\} \!\larrow\! \langle \text{\ORAM.\Read}(\sk, \code{b}), \text{\ORAM.\Serve}(\db)\rangle$. The client takes as inputs the encryption key $\sk$ and the block number $\code{b}$ it is looking for and returns the data stored in this block $\code{data}$ (or $\bot$ if it detects tampering). The server takes as input the database $\db$ and outputs it after the interaction. The $\Write$ algorithm is performed by the client and interacts with the $\Serve$ algorithm run by the server $\{\top/\!\!\perp, \db\} \!\larrow\! \langle \text{\ORAM.\Write}(\sk, \code{b}, \code{data}), \text{\ORAM.\Serve}(\db)\rangle$. The client takes as input the encryption key $\sk$, the block number $\code{b}$ and the data to be written in this block $\code{data}$, and outputs nothing. The server takes as input the database $\db$ and outputs the updated version after the interaction.

Several instantiations of ORAM are possible and we notably use three of them in our construction: a trivial ORAM where clients download and write the entire encrypted database, a tree-based ORAM~\cite{shi2011treeoram, stefanov2018pathoram}, and a recursive ORAM~\cite{shi2011treeoram}. See Appendix~\ref{ap:oram}. The choice of instantiation depends on the volume of data (see \S\ref{sec:eval}).

\begin{figure*}[t]
    \centering
    \resizebox{1.6\columnwidth}{!}{
    \tikzset{every picture/.style={line width=0.75pt}} %set default line width to 0.75pt        

\begin{tikzpicture}[x=0.75pt,y=0.75pt,yscale=-1,xscale=1]
%uncomment if require: \path (0,442); %set diagram left start at 0, and has height of 442

%Straight Lines [id:da2122149950354617]  Trusted
\draw    (100,48) -- (100,320) ;
%Straight Lines [id:da6818221467173928] RS
\draw    (221,48) -- (221,320) ;
%Straight Lines [id:da9839478157742594] Token
\draw    (341,48) -- (341,320) ;
%Straight Lines [id:da7512394748446121] Reclaim
\draw    (580,48) -- (580,320) ;
%Straight Lines [id:da9222448745848003] Auditor
\draw    (700,48) -- (700,320) ;
%Straight Lines [id:da04023998363404102] Vendor
\draw    (460,48) -- (460,320) ;

%Shape: Rectangle [id:dp8334017342801701] 
\draw  [color={rgb, 255:red, 245; green, 166; blue, 35 }  ,draw opacity=1 ] (10,50) -- (780,50) -- (780,118) -- (10,118) -- cycle ;
%Shape: Rectangle [id:dp19889230299471217] 
\draw  [color={rgb, 255:red, 65; green, 117; blue, 5 }  ,draw opacity=1 ] (10,120) -- (780,120) -- (780,178) -- (10,178) -- cycle ;
%Shape: Rectangle [id:dp6504620507934314] 
\draw  [color={rgb, 255:red, 31; green, 76; blue, 124 }  ,draw opacity=1 ] (10,180) -- (780,180) -- (780,238) -- (10,238) -- cycle ;
%Shape: Rectangle [id:dp4896083278790707] 
\draw  [color={rgb, 255:red, 208; green, 2; blue, 27 }  ,draw opacity=1 ] (10,240) -- (780,240) -- (780,288) -- (10,288) -- cycle ;
%Shape: Rectangle [id:dp1604907957227265] 
\draw  [color={rgb, 255:red, 189; green, 16; blue, 224 }  ,draw opacity=1 ] (10,290) -- (780,290) -- (780,318) -- (10,318) -- cycle ;

%Shape: Rectangle [id:dp5034061263714751] 
\draw   (210,130) -- (350,130) -- (350,170) -- (210,170) -- cycle ;
%Shape: Rectangle [id:dp7500293619064394] 
\draw   (330,190) -- (470,190) -- (470,230) -- (330,230) -- cycle ;
%Shape: Rectangle [id:dp008338698827383362] 
\draw  [draw opacity=0][fill={rgb, 255:red, 255; green, 255; blue, 255 }  ,fill opacity=1 ] (70,55) -- (140,55) -- (140,90) -- (70,90) -- cycle ;

% Text Node
\draw (80,38) node [inner sep=0.75pt]  [font=\normalsize] [align=left] {\textbf{Trusted Party}};
% Text Node
\draw (221,38) node [inner sep=0.75pt]  [font=\normalsize] [align=left] {\textbf{Registration Station}};
% Text Node
\draw (345,38) node [inner sep=0.75pt]  [font=\normalsize] [align=left] {\textbf{Smart Card}};
% Text Node
\draw (460,38) node [inner sep=0.75pt]  [font=\normalsize] [align=left] {\textbf{Vendor}};
% Text Node
\draw (580,38) node [inner sep=0.75pt]  [font=\normalsize] [align=left] {\textbf{Reclaim Station}};
% Text Node
\draw (700,38) node [inner sep=0.75pt]  [font=\normalsize] [align=left] {\textbf{Auditor}};

% Text Node
\draw (11,53) node [anchor=north west][inner sep=0.75pt]  [font=\normalsize] [align=left] {\textcolor[rgb]{0.96,0.65,0.14}{\textbf{Setup}}};
% Text Node
\draw (11,123) node [anchor=north west][inner sep=0.75pt]  [font=\normalsize] [align=left] {\textcolor[rgb]{0.25,0.46,0.02}{\textbf{Registration}}};
% Text Node
\draw (11,183) node [anchor=north west][inner sep=0.75pt]  [font=\normalsize] [align=left] {\textcolor[rgb]{0.12,0.3,0.49}{\textbf{Transaction}}};
% Text Node
\draw (11,243) node [anchor=north west][inner sep=0.75pt]  [font=\normalsize] [align=left] {\textcolor[rgb]{0.82,0.01,0.11}{\textbf{Reclaim}}};
% Text Node
\draw (11,293) node [anchor=north west][inner sep=0.75pt]  [font=\normalsize] [align=left] {\textcolor[rgb]{0.74,0.06,0.88}{\textbf{Audit}}};

% Text Node
%\draw (60,70) node [anchor=south west] [inner sep=0.75pt]    {$\params {\leftarrow} \Com.\Gen(\secuparam)$};
% Text Node
\draw (13,84) node [anchor=south west] [inner sep=0.75pt]    {$(\skt, \pkt, \db ){\larrow} \TrustedSetup(\secuparam)$};
% Text Node
\draw  [draw opacity=0][fill={rgb, 255:red, 255; green, 255; blue, 255 }  ,fill opacity=1 ]  (208,85) -- (231,85) -- (231,102) -- (208,102) -- cycle  ;
\draw (219.5,102) node [anchor=south] [inner sep=0.75pt]    {$(\sks, \pks) {\larrow} \SetupSign(\secuparam)$};
% Text Node
%\draw  [draw opacity=0][fill={rgb, 255:red, 255; green, 255; blue, 255 }  ,fill opacity=1 ]  (207,98) -- (232,98) -- (232,115) -- (207,115) -- cycle  ;
%\draw (219.5,117.6) node [anchor=south] [inner sep=0.75pt]    {$\db {\larrow} \code{SetupDB}(N)$};
% Text Node
\draw  [draw opacity=0][fill={rgb, 255:red, 255; green, 255; blue, 255 }  ,fill opacity=1 ]  (328,60) -- (352,60) -- (352,82) -- (328,82) -- cycle  ;
\draw (340,82) node [anchor=south] [inner sep=0.75pt]    {$\myst {\larrow} \SetupToken(\pks, \skt, \pkt)$};
% Text Node
\draw (209,140) node [anchor=south east] [inner sep=0.75pt]    {$(\sks, \bud, \db)$};
% Text Node
\draw (208,166.6) node [anchor=south east] [inner sep=0.75pt]    {$(\Hid, \db)/\db$};
% Text Node
\draw (351,137.6) node [anchor=south west] [inner sep=0.75pt]    {$\myst$};
% Text Node
\draw (352,166.6) node [anchor=south west] [inner sep=0.75pt]    {$\myst /\!\!\perp$};
% Text Node
\draw (280,150) node    {$ \langle \Request {\leftrightarrow} \Allocate\rangle$};
% Text Node
\draw (329,199) node [anchor=south east] [inner sep=0.75pt]    {$(\myst, \price)$};
% white for overflow
\draw  [draw opacity=0][fill={rgb, 255:red, 255; green, 255; blue, 255 }  ,fill opacity=1 ]  (207,212) -- (232,212) -- (232,227) -- (207,227) -- cycle  ;
%Straight Lines [id:da04023998363404102] RS line grey
\draw   [color={rgb, 255:red, 211; green, 211; blue, 211 }  ,draw opacity=1 ] (221,212) -- (221,227) ;
% Text Node
\draw (328,226.6) node [anchor=south east] [inner sep=0.75pt]    {$(\success,\price,\varepsilon) / \!\perp$};
% Text Node
\draw (471,199) node [anchor=south west] [inner sep=0.75pt]    {$(\varepsilon,\price,\db)$};
% Text Node
\draw (472.21,226.84) node [anchor=south west] [inner sep=0.75pt]  [rotate=-358.45]  {$(\pi,\db)/\!\perp$};
% Text Node
\draw (400,210) node    {$\langle \Spend {\leftrightarrow} \Receive \rangle$};

% Text Node
\draw  [draw opacity=0][fill={rgb, 255:red, 255; green, 255; blue, 255 }  ,fill opacity=1 ]  (300,243) -- (600,243) -- (600,260) -- (300,260) -- cycle  ;
%Straight Lines [id:da04023998363404102] RS and TP fix grey line
\draw   [color={rgb, 255:red, 211; green, 211; blue, 211 }  ,draw opacity=1 ] (341,243) -- (341,260) ;
\draw   [color={rgb, 255:red, 211; green, 211; blue, 211 }  ,draw opacity=1 ] (580,243) -- (580,260) ;
\draw (459.5,260) node [anchor=south] [inner sep=0.75pt]    {$(\spendsumsim, \aggproof, \varepsilon) {\leftarrow} \GenProof (\varepsilon, (\spend^{(i)}\!\!\!, \pi^{(i)})_{i})$};

% Text Node
\draw  [draw opacity=0][fill={rgb, 255:red, 255; green, 255; blue, 255 }  ,fill opacity=1 ]  (368,264) -- (705,264) -- (705,282) -- (368,282) -- cycle  ;
%Straight Lines [id:da04023998363404102] Vendor and Audit fix grey line
\draw   [color={rgb, 255:red, 211; green, 211; blue, 211 }  ,draw opacity=1 ] (460,264) -- (460,282) ;
\draw   [color={rgb, 255:red, 211; green, 211; blue, 211 }  ,draw opacity=1 ] (700,264) -- (700,282) ;
\draw (579.5,280) node [anchor=south] [inner sep=0.75pt]    {$\top/\!\!\perp\!\! {\larrow} \VerifyProof (\pks, \varepsilon, \spendsumsim, \aggproof\!)$};

% Text Node
\draw  [draw opacity=0][fill={rgb, 255:red, 255; green, 255; blue, 255 }  ,fill opacity=1 ]  (699,294) -- (705,294) -- (705,312) -- (699,312) -- cycle  ;
%Straight Lines [id:da04023998363404102] RS and Vendor fix grey line
\draw   [color={rgb, 255:red, 211; green, 211; blue, 211 }  ,draw opacity=1 ] (580,294) -- (580,312) ;
\draw (768,312) node [anchor=south east] [inner sep=0.75pt]    {$\top/\!\!\perp \larrow \VerifyProof (\pks, \varepsilon, \spendsumsim, \aggproof)$};

\end{tikzpicture}
    }
    \captionsetup{font=small} 
    \caption{Overview of the solution. See Section~\ref{sec:sol:syntax} for details.}
    \label{fig:overview}
\end{figure*}

\subsection{Syntax}\label{sec:sol:syntax}
We now present our solution's syntax (see Figure~\ref{fig:overview}).

\descr{Setup.} The trusted party generates the cryptographic parameters shared by all the parties. It also generates an encryption key $\skt$ and initializes the ORAM database. The registration station generates a signing key pair $(\sks, \pks)$. Finally, the trusted party initializes the state stored in each of the smart cards. The setup syntax is:
\begin{itemize}[leftmargin=*]
    \item $(\skt, \pkt, \db) \leftarrow \TrustedSetup(\secuparam, N)$. The trusted party takes as input the security parameter $\ell$ and the maximal number of households $N$. It outputs a key pair $(\skt, \pkt)$ and an initialized database \db.
    \item $(\sks, \pks) \leftarrow \SetupSign(\secuparam)$. The registration station takes as input the security parameter $\ell$ and returns a pair of keys (\sks, \pks).
    \item $\myst \larrow \SetupToken(\pks, \skt, \pkt)$. The smart card takes as input the public key \pks, and the key pair $(\skt, \pkt)$. It returns an initial state \myst.
\end{itemize}

\descr{Registration.} During registration, the user interacts with the registration station which assesses if their household is eligible for aid. If so, the registration station sets a budget $\bud$ and issues a smart card to the user. The syntax of the registration protocol is:
\begingroup
\small
\begin{equation*}
    \{\myst/\!\!\perp, (\Hid, \db)/\db\}\! \,{\larrow}\, \!\langle\Request(\myst), \Allocate(\sks, \bud, \db)\rangle.
\end{equation*}
\endgroup
The smart card runs $\Request$ and takes as input its own state \myst. The registration station runs $\Allocate$ and takes as input the allocated budget \bud, the secret signing key \sks, and the database \db. The smart card outputs its new state \myst in case of success or the error symbol $\perp$ in case of failure. The registration outputs the updated database \db and the household ID \Hid in case of success or only the database \db in case of failure. 

\descr{Transaction.} To perform a transaction, the user unlocks her smart card. The card then interacts with the vendor (or with the shared database via the vendor, see \S\ref{sec:disc}). The transaction syntax is: 

\begingroup
\small
\setlength\abovedisplayskip{-0.1cm}
\setlength\belowdisplayskip{0.1cm}
\[\left\{\begin{matrix} (\success,\price,\varepsilon)/\!\perp,&\\&\hspace{-3.5cm}(\pi,\db)/\!\!\perp\end{matrix}\right\}  \leftarrow \left\langle\begin{matrix}\Spend(\myst, \price),&\\&\hspace{-2cm}\Receive(\varepsilon,\price,\db)\end{matrix}\right\rangle.\]
\endgroup

\noindent The smart card runs $\Spend$ and takes as input its state $\myst$ and the price $\price$. The vendor runs $\Receive$ and takes as input the current reclaim period $\varepsilon$, the price $\price$, and the database \db. The smart card outputs whether the transaction was successful or not. If so, it also outputs the price and the reclaim period provided by the vendor during the protocol. In case of success, the vendor outputs the proof $\pi$ and the updated database $\db$, otherwise it outputs the error symbol $\perp$. The vendor stores the tuple $(\price, \pi)$.

\descr{Reclaim.} After each period, the vendor reclaims the funds for the items it sold. The vendor has to prove that they have sold for the equivalent of a certain amount of money. Using all the proofs of transactions they received from the clients over the period, they compute an aggregated proof for the total sum using $\GenProof$. The reclaim station can then verify this proof using $\code{VerifyReclaimProof}$ and reimburse the vendor with physical cash based on this total amount. The reclaim syntax is:
\begin{itemize}[leftmargin=*]
    \item \begingroup
        \small $(\spendsumsim, \aggproof, \varepsilon) {\leftarrow} \GenProof (\varepsilon, (\spend^{(i)},\allowbreak   \pi^{(i)})_{i})$. 
        \endgroup
        The vendor takes as input the period $\varepsilon$ for which they want to be reclaimed and the stored tuples ($\spend^{(i)}, \pi^{(i)})_{i}$
        for all transactions they want to reclaim. The algorithm outputs the sum of all transactions \spendsumsim, the proof $\aggproof$, and the period $\varepsilon$.
    \item \begingroup
        \small $\top/\!\!\perp \larrow \VerifyProof (\pks, \varepsilon, \spendsumsim, \aggproof).$ 
        \endgroup
        The reclaim station takes as inputs the public signing key \pks, the reclaim period $\varepsilon$, the sum of transactions for this period \spendsumsim, and the proof $\aggproof$. The algorithm outputs $\top$ in case of success and $\perp$ otherwise. 
\end{itemize}

\descr{Auditing.} The auditor checks the consistency of the \ICRC's spending. When an audit takes place, the \ICRC\ provides the auditor with the aggregated proof computed during the reclaim phase who will check it. The auditing syntax is the same as the verification of the reclaim proof.
As the reclaim process is not interactive, if the auditor accepts the proof, it will accept the allocated budget at the same time.

\subsection{Solution Details}\label{sec:sol:soldetails}
We now detail how our solution operates.

\descr{Setup.} 
We specify the instantiation of the three setup algorithms:
\begin{itemize}[leftmargin=*]
    \item $(\skt, \pkt, \db) {\larrow} \TrustedSetup(\secuparam, N)$. The trusted party sets up the Pedersen commitment scheme by calling $\params {\leftarrow} \Com.\Gen(\secuparam)$ where $\params=(\GG, q, g, h)$. It publishes $\params$. It then runs the ORAM initialization that returns $(\sk, \db) \gets \ORAM.\Init(\secuparam, N)$ where $\sk$ is the ORAM key and $\db$ is an initialized encrypted database. It also generates a key $\prfk$ for a pseudorandom function (PRF).  
    Finally, the trusted party sets $\skt=(\sk, \prfk)$, $\pkt=\perp$ and returns $(\skt, \pkt, \db)$.\footnote{We include the public key $\pkt$ as output of $\TrustedSetup$ for consistency with Wang \etal~\cite{wang2023digital}.}
    \item $(\sks, \pks) {\larrow} \SetupSign(\secuparam)$. The registration station takes as input \secuparam and runs (\sks, \pks) \larrow \DS.\KGen(\secuparam) to obtain a signing key pair. 
    \item $\myst {\larrow} \SetupToken(\pks, \skt, \pkt)$. The trusted party sets up the smart card by securely storing the public key of the registration station \pks, the keys $\skt = (\sk, \prfk)$ and $\pkt$. Let $\skt' = \sk$, and return the state $\myst = (\pks, \skt', \pkt, \prfk)$.
\end{itemize}

\descr{Registration.} %\label{sec:sol:syntax:registration}
To register a household, the registration station (RS) issues a smart card to one user and runs the protocol depicted in Figure~\ref{fig:registration}.\footnote{W.l.o.g. we consider the registration of the first member of a household. See Appendix~\ref{append:whisper} and prior work~\cite{wang2023digital} for details.}

\begin{figure}
    \centering
    \begin{pchstack}[space=1em, center, boxed, space=0em]
    \procedure[linenumbering, width=0.1\textwidth, codesize=\scriptsize, jot=-0.3mm,xshift=-0.5mm]{\scriptsize Smart~Card:\,$\Request(\myst)$}{
        \label{my:line:parse}(\pks, \skt,  \pkt, \prfk) {\leftarrow} \myst \\
        \label{my:line:generate_values}\\
        \label{my:line:send_id}\code{Receives } \Hid\\
        \label{my:line:verify_sks_start} \code{Receives } \sks \\
        \sigma {\larrow}\code{DS.Sign(\sks, } r), \code{r} {\in}_\mathcal{R} \ZZ_q\\
        \pcif \neg\code{ DS.Verif}(\pks, r, \sigma) \pcthen \\
        \label{my:line:verify_sks_end}\pcind \code{ abort and go to }\ref{my:line:abort}\\
        \label{my:line:send_bud} \code{Receives }\bud\\
        \label{my:line:write_bud}  \code{ORAM.Write}(\skt, \Hid, \bud \!\parallel\! 0)\\
        \label{my:line:state}\myst \larrow (\sks, \pks,\skt, \pkt, \prfk, \Hid)\\
        \label{my:line:output} \pcreturn \myst\\
        \label{my:line:abort} \pcreturn \perp
    }
    \procedure[width=0.20\textwidth, codesize=\scriptsize,jot=-0.3mm]{\scriptsize RS: $\Allocate(\sks, \bud, \db)$}{
        \\
        \code{Picks next available } \Hid \\
        \code{Sends } \Hid\\
        \code{Sends } \sks \pcskipln\\
        \\
        \\
        \\
        \code{Sends } \bud\\
        \code{ORAM.Serve}(\db)\\
        \\
        \pcreturn (\Hid, \db)\\
        \pcreturn \db
    }
    \end{pchstack}
    %\end{varwidth}
%    \captionsetup{font=small} 
    \caption{Registration protocol}
    \label{fig:registration}
\end{figure}

\begin{enumerate}[leftmargin=*]
    \item The smart card takes as input its state \st\xspace and the RS takes as input the allocated budget \bud, the secret signing key \sks and the database \db.
    \item The smart card parses \myst as $(\pks, \skt, \pkt, \prfk)$ (line 1).
    \item The RS picks the next available household ID \Hid, and sends it to the smart card (lines 2--3).
    \item The RS sends \sks to the smart card which verifies it with respect to \pks (lines 5--7).
    \item  The RS sends the allocated budget \bud to the smart card (line 8)
    \item The smart card writes its budget and a fresh counter $\ctr=0$ in the \db using \ORAM.\Write($\cdot$) (line 9).
    \item The smart card outputs $\myst = (\sks, \pks, \skt,\allowbreak \pkt, \prfk, \Hid)$ or $\perp$ in case of failure. The RS outputs \Hid and \db or only \db in case of failure. (Lines 11--12.)
\end{enumerate}

\descr{Transaction.} %\label{sec:sol:syntax:transaction}
When executing a transaction, the user first authenticates to the card and unlocks it. As in Wang~\etal~\cite{wang2023digital}, users can authenticate using for example a PIN or biometrics. The unlocked smart card checks if the remaining balance of the household is higher than the agreed price for the transaction. If so, the smart card sends the vendor a \textit{proof of transaction} valid for the current reclaim period and updates the balance. If the proof is valid, the transaction is successful. 
We describe how the smart card interacts with the vendor to proceed to a transaction as presented in Figure~\ref{fig:transaction}. %The protocol is defined as follows and is implemented as described in~\ref{fig:transaction}:

\begin{figure}
    \centering   
    \begin{pchstack}[boxed, space=0em]
        \procedure[linenumbering, width=0.1\textwidth, codesize=\scriptsize,jot=-0.3mm,xshift=-0.5mm]
            {\scriptsize Smart~Card:\,$\Spend(\myst, \price)$}{%
            \label{my:line:unpack}(\sks, \pks, \skt,  \pkt, \prfk, \Hid) {\leftarrow} \myst \\ 
            \label{my:line:exchange_values} \code{Receives } \price \code{ and } \varepsilon\\
            \label{my:line:retrieve}\bal \!\parallel\! \ctr {\leftarrow} \code{ORAM.Read}(\skt,  \Hid)\\
            \label{my:line:check_money}\code{b} \leftarrow \price \leq \bal\\
            \pcif  \neg\code{ b} \pcthen \code{abort and go to } \ref{my:line:failure}\\
            \label{my:line:update_bal}\bal \leftarrow \bal - \price\\
            \label{my:line:update_ctr}\ctr \leftarrow \ctr + 1 \\
            \label{my:line:commit}\code{Com} {=} \code{Com.Commit}(\price,  \code{r}), \code{r} {\in}_\mathcal{R} \ZZ_q\\
            \label{my:line:hashctr} \tau = \code{PRF}_{\prfk}(\Hid \parallel \ctr)\\
            \label{my:line:sign}\sigma = \code{DS.Sign}   (\sks, \tau \!\parallel\! \varepsilon \!\parallel\! \code{Com})\\
            \label{my:line:proof}\pi = (\sigma, \tau, \code{Com}, \code{r})\\
            \label{my:line:update_db}\code{ORAM.Write}(\skt, \Hid, \bal \!\parallel\! \ctr) \\
            \label{my:line:parse_proof}\code{Sends } \pi \\
            \label{my:line:check_proof} \pcskipln \\ 
            \\
            \\
            \label{my:line:success}\pcreturn (\success, \price, \varepsilon)\\
            \label{my:line:failure}\pcreturn \perp
            }
    
        \procedure[width=0.21\textwidth, codesize=\scriptsize,jot=-0.3mm]
            {\scriptsize Vendor:\,$\Receive(\varepsilon, \price, \db)$}{%
            \\
            \code{Sends } \price \code{ and } \varepsilon\\
            \db \leftarrow \code{ORAM.Serve}(\db)\\
            \\
            \\
            \\
            \\
            \\
            \\
            \\
            \\
            \code{ORAM.Serve}(\db) \pcskipln\\
            \code{Parses } \pi = (\sigma, \tau, \code{Com}, \code{r})\\
            \code{b} {\leftarrow} \code{DS.Verif}  (\pks, \sigma, \tau \!\parallel\! \varepsilon \!\parallel\!  \code{Com}) \pcskipln\\
            \pcind \wedge \code{Com} {=} \code{Com.Commit} (\price, \code{r})\\
            \pcif \neg\code{ b} \pcthen \code{abort and go to } \ref{my:line:failure}\\
            \pcreturn (\pi, \db)\\
            \pcreturn \perp
            }
        \end{pchstack}
%    \captionsetup{font=small} 
    \caption{Transaction protocol}
    \label{fig:transaction}
\end{figure}

\begin{enumerate}[leftmargin=*]
    \item The smart card takes as input its state \myst and the vendor takes as input ($\varepsilon$, \price, \db).
    \item The smart card unpack its state \myst in order to access its keys and household identifier \Hid (line 1).
    \item The vendor sends \price and current reclaim period $\varepsilon$ to the smart card (line 2).
    \item The smart card and the vendor interact so that the smart card retrieves its current \bal and the counter \ctr from the \db privately using $\ORAM.\Read$ (line 3).
    \item The smart card checks if it has enough budget left and if so updates its balance (lines 4--6). The smart card updates the counter of transactions \ctr (line 7), creates a commitment \Com (line 8), creates a transaction tag $\tau$ using $\code{PRF}_{\prfk}$ (line 9), signs the transaction tag, period, and commitment (line 10), and produces the corresponding proof $\pi$ (line 11). The smart card sends the proof to the vendor and the latter parses it (line 13).
    \item The vendor verifies the proof (line 14).
    \item The smart card outputs $((\success, \price, \varepsilon)/\perp)$, and the vendor outputs (($\pi$, \db)/$\perp$).
\end{enumerate}

\descr{Reclaim.} %\label{sec:sol:syntax:reclaim}
The vendor creates from all the transaction proofs a reclaim proof that it uses to be reclaimed. %similar to the one introduced by Wang et al.~\cite{wang2023digital}.\vspace{1em}

\begin{itemize}[leftmargin=*]
\setlength\itemsep{0.2em}
    \item \textbf{Preparation phase}. The vendor creates the reclaim proof by running $\GenProof$. It takes as input the period $\varepsilon$, and a set of transactions and proofs $(\spend^{(i)}, \pi^{(i)})_{i}$. The vendor unpacks the proof $\pi^{(i)} {=} (\sigma^{(i)}\!, \tau^{(i)}, \code{Com}^{(i)}\!, \code{r}^{(i)})$ and computes the sum of amounts and the sum of random values
    \begin{center}
        $\spendsumsim = \sum_i \spend^{(i)}, \quad \code{ r}_\code{sum} =\sum_i \code{r}^{(i)}.$
    \end{center}
    Then, the vendor creates the proof
    \begin{center}
        $\aggproof = (\code{r}_\code{sum}, (\sigma^{(i)}, \tau^{(i)}, \code{Com}^{(i)})_i),$
    \end{center}
    and returns it with the sum of spent amount \spendsumsim and the reclaim period $\varepsilon$. 

    \item \textbf{Verification phase}. The reclaim station checks the proof. 
    
    The \code{VerifyReclaimProof} algorithm takes as input the signing public key \pks, the reclaim period $\varepsilon$, the total spent amount \spendsumsim and the reclaim proof $\aggproof$. The reclaim station checks:
    \begin{itemize}[leftmargin=*]
        \item the validity of all the signatures in $\aggproof$
        
        \begin{center}
            $\code{V} {=} \code{DS.Verif}(\pks, \sigma^{(i)}, \tau^{(i)} \parallel \varepsilon \parallel \code{Com}^{(i)})$,      
        \end{center}
        
        \item the sum of the spent amount and the sum of random values match the commitment
        
        \begin{center}
            $\prod_i \code{Com}^{(i)} {=} \code{Com.Commit}(\spendsumsim, \code{r}_\code{sum}),$
        \end{center}

        \item that each transaction tag $\tau$ is unique.
    \end{itemize}
    If all checks pass it outputs $\top$ and $\perp$ otherwise.
\end{itemize}

\descr{Auditing.} %\label{sec:sol:syntax:audit}
The auditor verifies the aggregated proof the same way the reclaim station did. It reproduces the verification phase from the reclaim phase.

\section{Security and Privacy Analsis}
\label{sec:sec}
We now analyze the security and privacy of our construction. In the following, we consider the smart card token to be a Secure Element (SE) providing both tamper-resistant hardware and secure storage~\cite{ShepherdAGLMASC16}, i.e., smart cards honestly execute the protocols and secrets cannot be extracted. 

\subsection{Oracles}

We design oracles that enable the adversary to interact with honest and malicious parties and keep track of global variables that are accessed in the different games.

\subsubsection{Registration oracles}

We create registration oracles in Algorithm~\ref{alg:reg-oracles}. They enable the adversary to register a fixed number of smart cards per household under different threat models: honest parties, a malicious user, and a malicious registration station. 

\begin{itemize}[leftmargin=*]
    \item \oraclehregargs: this oracle enables the adversary to register a chosen number of smart cards \tnb for an honest household by interacting with an honest registration station. The oracle keeps track of the allocated budget per household in a set of all budgets \Bud. For each registered smart card, it also creates smart card ID \tid by incrementing a counter $\code{counter}$ (initialized to~0). These identifiers allow a malicious vendor to choose which smart card they want to interact with later on. Finally, the oracle tracks the state of the smart cards in \SMap\ and the allocated smart card IDs \tid in the set \myTH.
    \item \oraclemuregargs: this oracle enables the adversary to (maliciously) register a new household interacting with an honest registration station.  The oracle tracks these malicious households by adding their \Hid to the set of malicious IDs \myM and storing the allocated budget \bud in the set of budgets \Bud.
    \item \oraclemsregargs: this oracle enables the adversary to register an honest household to an honest-but-curious registration station. The adversary can choose the smart card IDs \code{ChosenIds} as long as they have not been used before. The oracle tracks the state of the smart cards in $\SMap$ and returns the full transcript to the adversary.
\end{itemize}

\begin{algorithm}[t]
    \caption{Registration Oracles}\label{alg:reg-oracles}
    \begin{algorithmic}[1]
      \footnotesize
      \Function{\oraclehregargs}{\bud, \sks, \tnb}\Comment{Honest parties} 
        \State $\myst \larrow \SetupToken(\pks, \skt, \pkt)$
        \State ${\{\myst/\!\!\perp, (\Hid, \db)/\db\}} {\larrow} \langle \Request(\myst), \!\Allocate(\sks, \bud, \db)\rangle$
        \State $\Bud[\Hid] \larrow \bud$
        \State $\pcfor i \in \{1, \ldots, \tnb\}$
        \State\hskip\algorithmicindent $\tid \leftarrow \counter + 1$
        \State\hskip\algorithmicindent $\SMap[\tid] \larrow \myst$
        \State\hskip\algorithmicindent $\code{NewIds} \leftarrow \code{NewIds} \cup \{\tid\}$
        \State$\myTH \larrow \myTH \cup \{\code{NewIds}\}$ 
        \State $\pcreturn \code{NewIds}$
      \EndFunction
      \vspace{1mm}

      \Function{\oraclemuregargs}{\bud, \sks}\Comment{Malicious user, honest RS}
        \State $\{ - , (\Hid, \db)/\db\} \larrow \langle \adv, \Allocate(\sks, \bud,\db)\rangle$
        \State $\myM \larrow \myM \cup \{\Hid\}$ 
        \State $\Bud[\Hid] \leftarrow \bud$
      \EndFunction
      \vspace{1mm}

      \Function{\oraclemsregargs}{\code{ChosenIds}, \bud}\Comment{Hon. user, cur. RS}
        \State $\myst \larrow \SetupToken(\pks, \skt, \pkt)$
        \State ${\{\myst/\!\!\perp, (\Hid, \db)/\db\}} {\larrow} \langle \Request(\myst), \!\Allocate(\sks, \bud, \db)\rangle$
        \State Let $\code{trs}$ be the transcript of the previous interaction.
        \State $\pcfor \tid \in \code{ChosenIds}$
        \State\hskip\algorithmicindent $\pcif \tid \notin \myTH$
        \State\hskip\algorithmicindent\hskip\algorithmicindent $\SMap[\tid] \larrow \myst$
        \State\hskip\algorithmicindent\hskip\algorithmicindent $\code{NewIds} \leftarrow \code{NewIds} \cup \{\tid\}$
        \State $\pcreturn (\code{NewIds}, \code{trs})$ 
      \EndFunction
    \end{algorithmic}
\end{algorithm}

\subsubsection{Transaction oracles}

Similarly, we create transaction oracles in Algorithm~\ref{alg:spend-oracles} that enable the adversary to execute a transaction under different threat models. 

\begin{itemize}[leftmargin=*]
    \item \oraclespendargs: This oracle simulates a normal interaction between an honest user (with an honest smart card) and an honest vendor. The user uses a smart card identified by \tid. 
    The oracle takes as input transaction-related information: the reclaim period $\varepsilon$ and the price $\price$ of the transaction. The oracle runs the spend transaction. 
    If the interaction is successful, the oracle adds the transaction to the record of received transactions $\inlog[\varepsilon]$ for the specific reclaim period $\varepsilon$, as well as to the spent transactions of honest users $\spenttrans[\varepsilon]$.
    \item \oraclespendmuargs: This oracle simulates the same situation as \oraclespendargs, except that the smart card is now played by the adversary. The oracle records the transaction output in $\inlog[\varepsilon]$.
    \item \oraclespendmvargs: This oracle simulates the same situation as \oraclespendargs except that the vendor is now played by the adversary. The malicious vendor can choose with which smart card they want to interact. The game records the transaction output in $\spenttrans[\varepsilon]$.
\end{itemize}

\begin{algorithm}[t]
    \caption{Spend Oracles}\label{alg:spend-oracles}
    \begin{algorithmic}[1]
      \footnotesize 
      \Function{\oraclespendargs}{$\varepsilon, \tid, \price$}\Comment{Honest parties} 
        \State $\myst \leftarrow \SMap[\tid] \code{ or abort if does not exist}$
        \State $\{(\success, \price, \varepsilon) / \perp, (\pi, \db) / \perp\}\leftarrow\langle \Spend(\myst, \price), \Receive(\varepsilon, \price, \db)\rangle$
        \State $\pcif \success \pcthen$
        \State \hskip\algorithmicindent $\inlog[\varepsilon] \leftarrow \inlog[\varepsilon] \cup \{\price\}$
        \State \hskip\algorithmicindent $\spenttrans[\varepsilon] \leftarrow \spenttrans[\varepsilon] \cup \{\price\}$
      \EndFunction
      \vspace{1mm}

      \Function{\oraclespendmuargs}{$\varepsilon, \amountspent$}\Comment{Mal. user}
        \State $\{ -, (\pi, \db)/\perp\}\leftarrow \langle\adv, \Receive(\varepsilon, \amountspent, \db)\rangle$
        \State $\pcif \code{vendor outputs } \neq \perp \pcthen$
        \State\hskip\algorithmicindent$\inlog[\varepsilon]\leftarrow \inlog[\varepsilon] \cup \{\amountspent\}$
      \EndFunction
      \vspace{1mm}

      \Function{\oraclespendmvargs}{$\varepsilon, \tid, \amountrec$}\Comment{Mal. vendor}
        \State $\myst \leftarrow \SMap[\tid] \code{ or abort if does not exist}$
        \State $\{(\success, \amountrec', \varepsilon')/\perp, - \} \leftarrow \langle\Spend(\myst, \price), \adv\rangle$
        \State $\pcif (\success \wedge \amountrec' = \amountrec \wedge \varepsilon' = \varepsilon) \pcthen$
        \State \hskip\algorithmicindent $\spenttrans[\varepsilon] \leftarrow \spenttrans[\varepsilon] \cup \{\amountrec\}$ 
      \EndFunction
      \vspace{1mm}
    \end{algorithmic}
\end{algorithm}

\subsection{Security of spending}

\begin{algorithm}[t]
    \caption{Overspending Experiment}\label{alg:overspend-exp}
    \begin{algorithmic}[1]
      \footnotesize
      \Function{\expsec}{\secuparam}\Comment{Malicious client, honest vendor} 
        \State $(\skt, \pkt, \db) \leftarrow \TrustedSetup(\secuparam, N)$
        \State $(\sks, \pks) \leftarrow \SetupSign(\secuparam)$
        \State $\varepsilon\textsuperscript{*}{ \leftarrow} \adv^{\substack{\oraclehreg, \oraclemureg,  \oraclespend,\\ \oraclespendmu, \oraclespendmv}}(\pks)$
        \State $\spendseen {\leftarrow}\! \sum {\inlog}[\varepsilon\textsuperscript{*}]$
        \State $\spendmax \leftarrow \sum \spenttrans[\varepsilon \textsuperscript{*}] + \sum_{\Hid \in \myM} \Bud[\Hid]$
        \State $\pcreturn \spendseen > \spendmax$
      \EndFunction
      \vspace{1mm}
    \end{algorithmic}
\end{algorithm}

The security of spending means that legitimate users cannot spend more than their allocated budget~(\reqlinky{overspending}) and illegitimate users cannot make any spending. We model this using the security experiment in Algorithm~\ref{alg:overspend-exp}. The adversary \adv\xspace can interact with the oracles $\oraclehregargs(\cdot)$ and \oraclespend to interact with an honest registration station and an honest vendor as an honest legitimate user. The adversary has access to the $\oraclemuregargs(\cdot)$ to register a malicious household and create several smart cards to be used in a chosen order. The adversary also has access to the \oraclespendmu to interact with an honest vendor as a malicious user trying to make a purchase. This oracle keeps track of all the transactions that succeeded, i.e., the amount that is spent on each successful transaction. Finally, the adversary also has access to the \oraclespendmv oracle which can simulate a machine-in-the-middle attack when used with \oraclespendmu.
After interacting with the oracles, the adversary outputs a target reclaim period $\varepsilon^*$ during which the transactions were performed. First, the challenger computes the total amount of money spent in successful calls to \oraclespend and \oraclespendmu during the chosen period. Then, the challenger computes the maximum spendable amount as all amounts spent by honest users (recorded in $\spenttrans[\varepsilon^{*}]$) and the sum of all budgets allocated to malicious households. If the amount spent is greater than the spendable sum, the adversary wins. As several malicious users could collude, we compare the sum of all transactions with the sum of all budgets.

We note that the adversary could perform a machine-in-the-middle attack during registration. Yet, in this situation, both the user and the registration station are honest and communicate via an authenticated channel. Hence, we do not model this situation.

Because the user needs to authenticate to the smart cards, illegitimate users cannot unlock the cards. As mentioned in \S\ref{sec:sol:soldetails}, actual deployments have a PIN or biometric mechanism in place to bind cards to their holder. We therefore do not model card theft in our games.

\begin{definition} The system is secure against overspending if the following probability is negligible:
\begin{center}
    $\code{Succ}^{\text{SEC}}(\adv) = \code{Pr}[\expsec(\secuparam) = 1]$
\end{center}
\end{definition}
\begin{theorem}
The system is secure against overspending providing that the smart card is a SE and the channel with smart cards is authenticated.
\end{theorem}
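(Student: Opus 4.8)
\begin{sketch}
The plan is to prove the theorem by a sequence of game hops that reduces a win in $\expsec$ to breaking one of the underlying primitives (ORAM integrity, the pseudorandomness of $\code{PRF}_{\prfk}$, the binding of \Com, or---for never-registered adversaries---the unforgeability of \DS), none of whose secrets leave the SE. My first step is to rewrite the winning predicate $\spendseen > \spendmax$. Since \oraclespendargs adds every successful price to \emph{both} $\inlog[\varepsilon^*]$ and $\spenttrans[\varepsilon^*]$, honest transactions contribute equally to \spendseen\ and \spendmax\ and cancel; \oraclespendmvargs adds only to $\spenttrans[\varepsilon^*]$ and can thus only raise \spendmax. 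Writing $U$ for the total value of transactions accepted through \oraclespendmuargs in period $\varepsilon^*$, the predicate can hold only if $U > \sum_{\Hid \in \myM}\Bud[\Hid]$. The whole proof therefore reduces to the single bound $U \le \sum_{\Hid \in \myM}\Bud[\Hid]$.

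I would next record the invariants that the environment guarantees. Because the card is an SE, every honest \Spend\ execution follows Figure~\ref{fig:transaction} exactly (and registration follows Figure~\ref{fig:registration}) and never leaks \skt\ or \prfk; by ORAM integrity each \ORAM.\Read\ on block \Hid returns the last value written there, so an adversary acting as the server cannot silently rewind a balance. Together with the authenticated registration channel---which forces the value written at line~\ref{my:line:write_bud} to equal the budget $\Bud[\Hid]$ chosen by \Allocate, ruling out a machine-in-the-middle that inflates it---this yields a \emph{balance invariant}: for every household the prices of its successful SE-card transactions sum to at most $\Bud[\Hid]$. This invariant bounds the honest-card spending that appears on the \spendmax\ side (through \oraclespendargs and \oraclespendmvargs) and confirms that honest households never overspend.

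The crux, and the step I expect to be the main obstacle, is the bound on $U$, \ie on what an \emph{adversarial} card can make an honest vendor accept. Two regimes must be separated. If $\adv$ never registers a malicious household ($\myM = \emptyset$), it never learns \sks; every accepted proof carries a signature valid under \pks, so any nonzero accepted transaction is a forgery against \DS, giving $U = 0$ up to the unforgeability advantage. The difficult regime is a \emph{registered} malicious household: during registration $\adv$ learns \sks\ (it is sent on line~\ref{my:line:verify_sks_start} of Figure~\ref{fig:registration}), so signature unforgeability can no longer cap the number of accepted proofs. The bound must instead be carried by the tag $\tau = \code{PRF}_{\prfk}(\Hid \parallel \ctr)$: the key \prfk\ lives only inside the SE, and the reclaim check insists that all tags be distinct. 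I would make this precise with a PRF hop that replaces SE-produced tags by random values, and then argue that a transaction counted in $\inlog[\varepsilon^*]$ can be charged against a budget only if its tag coincides with an SE-generated $(\Hid, \ctr)$ slot; since the SE advances \ctr\ only after a passing balance check, the admissible distinct-tag transactions of each malicious household sum to at most $\Bud[\Hid]$. The binding of \Com\ ties the committed value the vendor verifies to the price recorded in $\inlog[\varepsilon^*]$, so the amount charged is the amount that counts.

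Finally I would collect the hops: conditioned on ORAM integrity, PRF pseudorandomness, commitment binding, and (in the unregistered regime) signature unforgeability, we obtain $U \le \sum_{\Hid \in \myM}\Bud[\Hid]$, hence $\spendseen \le \spendmax$ and $\expsec(\secuparam)$ returns $0$. Each hop moves the success probability by at most the advantage against the corresponding primitive, and summing these advantages shows $\code{Succ}^{\text{SEC}}(\adv)$ is negligible. The two points I expect to need the most care are (i) precisely tying a transaction that an honest vendor accepts to a legitimate, SE-generated spending slot despite \sks\ being shared---this is where the argument genuinely rests on the PRF tag rather than on the signature---and (ii) the machine-in-the-middle combination of \oraclespendmuargs and \oraclespendmvargs, where proofs relayed from an honest card must be accounted on the \spenttrans\ side and not double-counted as fresh adversarial spending.
\end{sketch}
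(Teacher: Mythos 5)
Your decomposition of the winning predicate, the balance invariant for SE cards, and the treatment of never-registered adversaries (signature unforgeability, authenticated channels for the machine-in-the-middle case) all line up with the paper's ``legitimate''/``illegitimate'' split. The genuine gap is in what you call the difficult regime: a malicious household that registers through \oraclemuregargs and thereby learns \sks. Your plan to bound its spending via the tag $\tau = \code{PRF}_{\prfk}(\Hid \parallel \ctr)$ cannot work, because in the overspending experiment \expsec the honest vendor is the only verifier, and \Receive accepts a proof after checking exactly two things: that $\sigma$ verifies under \pks and that $\code{Com}$ opens to the agreed price. Tag distinctness is checked only in \VerifyProof at reclaim time, and \expsec never runs a reclaim; moreover the vendor, not knowing \prfk, has no way to distinguish a genuine PRF output from an arbitrary string, so nothing ties an accepted transaction to an ``SE-generated $(\Hid, \ctr)$ slot.'' Consequently, an adversary holding \sks can mint arbitrarily many tuples $(\sigma, \tau, \code{Com}, \code{r})$ with fresh arbitrary $\tau$, all of which the vendor accepts and records in $\inlog[\varepsilon^*]$; your target bound $U \le \sum_{\Hid\in\myM}\Bud[\Hid]$ is simply false in this regime, so no sequence of game hops can establish it.

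The paper escapes this regime not cryptographically but by the SE hypothesis itself: registration is an interaction between the registration station and a trusted-party-issued secure element over an authenticated channel, so the adversary never plays the card's role in registration or spending, never sees \sks (nor \skt, \prfk), and the malicious-registration regime is vacuous --- in the paper's words, the SE assumption removes the adversary's effective access to \oraclemuregargs and \oraclespendmuargs as card-emulating oracles, leaving only the use of honest cards (bounded by the balance checks against a database held by honest parties) and keyless fake cards (killed by the inability to produce a valid proof). If you want to keep your two-regime structure, the fix is to argue that the second regime cannot occur under the theorem's hypotheses, not to bound it with the PRF. As a side note, ORAM integrity and commitment binding are not needed here either --- in this experiment the database is held by the honest vendor and registration station, and the vendor recomputes the commitment itself; those assumptions belong to the over-reclaim theorem, where the vendor is malicious.
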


\begin{proof}[Proof Sketch] We treat legitimate and illegitimate users separately.

\textit{Legitimate.} 
As we model the smart card as a SE, the adversary does not have access to \oraclemuregargs and \oraclespendmuargs, but it can use honest cards in transactions (using \oraclespend). Honest cards, however, always check the latest balance in the database. So the only way the adversary could win is by making a smart card believe it has more budget left in the balance than in reality. However, to achieve this, the adversary would need to modify the balance in the database, but it cannot as the vendors and registration station are trusted. 

\textit{Illegitimate.} 
An illegitimate user is not provided with a smart card and cannot have access to the registration. Thus, the adversary does not have access to the registration oracles. Yet, it has access to the spending oracle. We discuss two attacks from the illegitimate user:
    \begin{itemize}
        \item The illegitimate user builds their smart card. To simulate this attack, the adversary calls the \oraclespendmu oracle. Yet, as we model the card as a SE, it does not know the secret key \skt and cannot create a valid proof $\pi$ for the vendor. The transaction fails.
        \item The illegitimate user performs a machine-in-the-middle attack. To simulate this attack, the adversary uses the \oraclespendmv and \oraclespendmu to relay information. However, it cannot modify it as the honest smart card and honest vendor set up an authenticated channel that prevents tampering. \qedhere
    \end{itemize}
\end{proof}

\subsection{Security of reclaim}

\begin{algorithm}[t]
    \caption{Over-reclaiming experiments}\label{alg:overreclaim-exp}
    \begin{algorithmic}[1]
      \footnotesize
      \Function{\expred}{\secuparam}\Comment{Mal. vendor, mal. users}
        \State $(\skt, \pkt) \leftarrow \TrustedSetup(\secuparam)$
        \State $(\sks, \pks) \leftarrow \SetupSign(\secuparam)$
        \State $\varepsilon\textsuperscript{*}, \spendsum, \pi^{*}_{\code{red}}\leftarrow\adv ^{\substack{\oraclehreg, \oraclemureg,\\ \oraclespendmv}}()$
        \State $\valid \leftarrow \VerifyRedProof (\pks, \varepsilon^{*}, \spendsum,\pi^{*}_{\code{red}})$
        \State $\spendmax \leftarrow \spenttrans[\varepsilon \textsuperscript{*}] + \sum_{\Hid \in \myM} \Bud[\Hid]$
        \State $\pcreturn \valid \wedge (\spendsum > \spendmax)$
      \EndFunction
    \end{algorithmic}
\end{algorithm}

A vendor should not be able to reclaim more than what was spent in their shop~(\reqlinky{over-reclaim}). We model this using the reclaim experiment in Algorithm~\ref{alg:overreclaim-exp}. Adversary \adv\xspace plays the role of a malicious vendor who might collude with malicious users as well.
The adversary has access to $\oraclehregargs(\cdot)$, and $\oraclemuregargs(\cdot)$ to register honest and malicious users. This enables them to create different smart cards and use them in a chosen order. Finally, the adversary has access to $\oraclespendmv$ to interact with users as a malicious vendor. After interacting with the oracles, the adversary outputs a chosen reclaim period $\varepsilon^*$, the amount of money it claims to have received $\spendsum$ and the corresponding proof $\pi^{*}_{\code{red}}$. First, the challenger verifies that the proof is valid for the amount $\spendsum$. Then, the challenger computes the maximum spendable amount as the sum of all allocated budgets. The adversary wins if the proof is valid and the reclaimed amount is greater than the allocated budget.
\begin{definition}\label{def:overreclaim}
The system is secure against over-reclaiming if the following probability is negligible:
\begin{center}
    $\code{Succ}^{\code{RECL}}(\adv) = \code{Pr}[\expred(\secuparam) = 1]$
\end{center}
\end{definition} 
\begin{theorem}
\label{thm:over-reclaim}
The system is secure against over-reclaiming providing that the smart card is a SE, the DS scheme is unforgeable, the ORAM scheme has integrity, and the commitment scheme is binding.
\end{theorem}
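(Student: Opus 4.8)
The plan is to proceed by reduction: assuming an adversary $\adv$ wins $\expred$ with non-negligible probability, I would construct one of three reductions contradicting, respectively, the unforgeability of \DS, the integrity of the \ORAM, or the binding property of \Com. The entry point is to unpack a winning transcript. Since $\adv$ wins, the output $\pi^{*}_{\code{red}}$ passes \VerifyRedProof for period $\varepsilon^{*}$, so it decomposes into tuples $(\sigma^{(i)}, \tau^{(i)}, \code{Com}^{(i)})_i$ for which every signature verifies under \pks, all tags $\tau^{(i)}$ are pairwise distinct, and $\prod_i \code{Com}^{(i)} = \code{Com.Commit}(\spendsum, \code{r}_\code{sum})$, while simultaneously $\spendsum > \spendmax$.

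First I would use binding to pin down the reclaimed amount. By the homomorphism of Pedersen commitments, $\prod_i \code{Com}^{(i)}$ is a commitment to $\sum_i \spend^{(i)}$, where $\spend^{(i)}$ is the value committed in $\code{Com}^{(i)}$; the verification equation therefore forces $\spendsum = \sum_i \spend^{(i)}$, since any discrepancy would yield two distinct openings of the same group element and break binding. Thus, up to a negligible binding-loss term, the reclaimed total equals the sum of the individually committed transaction amounts.

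Next I would establish the provenance of each tuple via \DS unforgeability. Every verifying signature is on $\tau^{(i)} \parallel \varepsilon^{*} \parallel \code{Com}^{(i)}$, and since honest smart cards are SEs that never leak \sks, any tuple whose message was never produced by an honest card during a \oraclespendmvargs query is a forgery. I would therefore partition the tuples into (i) those replayed from honest \oraclespendmvargs transactions on cards created via \oraclehregargs, and (ii) those attributable to malicious households in \myM registered via \oraclemuregargs. For bucket (i), the tag-uniqueness check, together with the fact that honest cards derive $\tau = \code{PRF}_{\prfk}(\Hid \parallel \ctr)$ from a strictly increasing counter whose monotonicity follows from \ORAM integrity (a rollback of the stored balance-and-counter block is detected), ensures no honest transaction is counted twice, so this bucket contributes at most $\sum \spenttrans[\varepsilon^{*}]$.

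The main obstacle is bounding bucket (ii), the contribution of malicious households. Here I would argue that, because each per-household balance is initialised to \bud and written back only through the integrity-protected \ORAM, the effective balance of every $\Hid \in \myM$ can only decrease from $\Bud[\Hid]$: reinstating a stale larger balance is caught by the integrity check, and any tuple not backed by a genuine decrement again reduces to a signature forgery. Summing the two buckets gives $\spendsum \le \sum \spenttrans[\varepsilon^{*}] + \sum_{\Hid \in \myM} \Bud[\Hid] = \spendmax$, contradicting $\spendsum > \spendmax$. The delicate step I expect to require the most care is ruling out cross-household tag collisions and replays without leaning on secrecy of \sks across malicious registrations; I would close this by tying each counted transaction to a distinct, budget-respecting \ORAM state update through the integrity property, and by invoking the pseudorandomness of \code{PRF} so that honest tags coincide with adversarial ones only with negligible probability.
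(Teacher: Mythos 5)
There is a genuine gap in your handling of what you call bucket (ii), the transactions attributed to malicious households registered via \oraclemuregargs. You argue that their contribution is bounded by $\sum_{\Hid \in \myM} \Bud[\Hid]$ because "each per-household balance is written back only through the integrity-protected \ORAM" and "any tuple not backed by a genuine decrement again reduces to a signature forgery." This is false in the paper's protocol: during registration, \Allocate\ \emph{sends the signing key \sks\ to the card-side party} (line 4 of the registration protocol). An adversary playing the user side of \oraclemuregargs\ therefore walks away holding \sks, after which it can produce arbitrarily many tuples $(\sigma, \tau, \code{Com}, \code{r})$ with fresh random tags, valid signatures, and commitments summing to any amount it likes --- never touching the \ORAM\ at all. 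None of your three reductions fires: the signatures are honestly generated under \sks, no commitment is double-opened, and the database is never tampered with. The paper closes this hole not by bounding malicious households' spending but by eliminating them: the SE assumption (together with the authenticated channel between cards and the registration station) means a malicious user cannot execute the card side of \Request\ at all, so the proof declares \oraclemuregargs\ unavailable and $\myM = \emptyset$. Your bucket (ii) must be argued away on those grounds, not bounded.

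Two smaller points. First, in bucket (i) you assert that counter monotonicity "follows from \ORAM\ integrity (a rollback \dots is detected)." In the paper, integrity only rules out \emph{modification}; a rollback restores a previously valid state and is \emph{not} detected by the \ORAM. What saves the day is that a rolled-back counter makes the card reproduce an already-used tag $\tau = \code{PRF}_{\prfk}(\Hid \parallel \ctr)$, so the duplicate is rejected by the tag-uniqueness check in \VerifyRedProof. You cite that check too, so the conclusion stands, but the mechanism should be attributed correctly. Second, your final appeal to PRF pseudorandomness (to avoid collisions between honest and adversarial tags) is an assumption the theorem does not grant you; in the paper's argument it is unnecessary precisely because, once $\myM = \emptyset$, there are no adversarially generated tags to collide with. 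Your treatment of binding and of forgery for non-registered parties otherwise matches the paper's cases (ii) and (i).
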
 
\begin{proof}[Proof Sketch]
As we model the smart card as a SE, the adversary does not have access to the  \oraclemuregargs oracle. Hence, the adversary could win in three different ways:
(i)~it can forge a new transaction and the corresponding signature to create a valid proof for this transaction; 
(ii)~it can find another opening to the aggregated commitment scheme; or
(iii)~it can trick honest cards into creating signed transactions for an amount larger than the household budget.

In the first scenario, the adversary tries to create a new transaction with its corresponding proof
\begin{center}
    $\pi^{\code{(fake)}} = (\sigma^{\code{(fake)}}, \tau^{\code{(fake)}}, \code{Com}^{\code{(fake)}}, \code{r}^{\code{(fake)}})$
\end{center}

If the adversary can create such a proof, it means it has forged the signature $\sigma^{\code{(fake)}}$. This attack is thus infeasible as long as the chosen digital signature scheme is unforgeable.

In the second scenario, the adversary finds another opening to the commitment \code{Com}\textsubscript{\textsf{red}} so it opens to the amount the adversary desires. This attack is also impossible provided the chosen commitment scheme is binding. For the Pedersen commitment scheme that we use, this is the case under the DL assumption.

Finally, we consider the third scenario where the adversary tries to trick honest cards into creating transactions for a total amount that exceeds a household's budget. Cards always update the balance that is stored in the database for before outputting a transaction, so the only way in which cards will sign incorrect transactions is when the balance in the database is incorrect. In the case that the vendor can tamper with the database $\db$ it has two possible attack scenarios: modify the database, or roll-back the database to an earlier version. Since we assume the ORAM has integrity protection, modifying the database (and thus the balances) is not possible. When doing a roll-back however, the database will now contain a previous value of the household counter $\ctr$. Any new transaction with this rolled-back database will thus repeat a transaction tag $\tau$, and $\VerifyRedProof$ will return $\bot$.
\end{proof}

\subsection{Security of audit}
To ensure the correctness of the audit operation~(\reqlinky{auditsec}) we must show that even when vendors and the reclaim station collude, they cannot convince the auditor that the amount of money spent was higher than it was in reality.

We note that the reclaim station has no special powers whatsoever. It does not know any new key material and only verifies proofs provided by the vendors. As a result, any coalition of the vendor, reclaim station, and malicious users that can break auditability, would also be able to over-reclaim, violating Theorem~\ref{thm:over-reclaim}.

\subsection{Privacy at purchase -- unlinkability}
\label{sec:privacy-unlink}

\begin{algorithm}[t]
    \caption{Indistinguishability experiment}\label{alg:unlinkability-exp}
    \begin{algorithmic}[1]
      \footnotesize
      \Function{\expind}{\secuparam}\Comment{Malicious RS, malicious vendor}
        \State $(\skt, \pkt, \db) \leftarrow \TrustedSetup(\secuparam, N) $
        \State $(\sks, \pks) \leftarrow \SetupSign(\secuparam)$
        \State $\tidX{0}, \tidX{1}, \price^* , \varepsilon^*\leftarrow \adv^{\substack{\oraclemsreg,\\ \oraclespendmv}}(\sks, \pks)$
        \State $\pcif \tidX{0}, \tidX{1} \notin \myTH\; \pcreturn \perp$
        \State $\myst_0, \myst_1 \gets \SMap[\tidX{0}], \SMap[\tidX{1}]$
        \State $\{(\success_b, \price', \varepsilon')/\perp, - \}\leftarrow \langle \Spend(\myst_b, \price^*),  \adv\rangle$\label{my:line:firstChallenge}
        \State $\pcif (\price' \neq \price^* \vee \varepsilon' \neq\varepsilon^*) \pcthen \pcreturn \perp$
        \State $\adv\textsuperscript{\oraclemsreg, \oraclespendmv}(\sks, \pks)$\Comment{Cannot spend with any cards that were generated as copies of $\tidX{0}$ or $\tidX{1}$}
        \State $\{(\success_{1-b}, \price', \varepsilon')/\perp, - \} \leftarrow \langle \Spend(\myst_{1-b}, \price^*),  \adv\rangle$\label{my:line:secondChallenge}
        \State $\pcif (\price' \neq \price^* \vee \varepsilon' \neq\varepsilon^*) \pcthen \pcreturn \perp$
        \State $\pcif \success_b \oplus \success_{1-b} \pcthen \pcreturn \perp$
        \State $\myb' \leftarrow \adv\textsuperscript{\oraclemsreg, \oraclespendmv}(\sks, \pks)$
        \State $\pcreturn \myb' = \myb$
      \EndFunction
      \vspace{1mm}
    \end{algorithmic}
\end{algorithm}

A vendor should not be able to link a user to a specific household nor link two users as belonging to the same household~(\reqlinky{unlinkability}). We model this using the indistinguishability experiment in Algorithm~\ref{alg:unlinkability-exp}. Adversary \adv\xspace plays the role of a malicious vendor. We model the registration station as honest but curious.

Modeling indistinguishability in the wallet setting is tricky. We model a strong indistinguishability experiment where the adversary can interact with known cards of households before and after the challenge transaction. In standard indistinguishability experiments, the adversary would interact with \emph{one} of two challenge households. However, this approach does not work here: the challenge household would have less balance after the transaction (which the adversary could detect). Instead, we let the adversary interact with \emph{both} challenge households, and the adversary must determine the order.

 In the experiment of Algorithm~\ref{alg:unlinkability-exp} the adversary is given the keys of the registration station. It can then request honest users to register using $\oraclemsregargs(\cdot)$. It receives the full transcript. It can also ask honest users to spend money using $\oraclespendmvargs(\cdot)$, here the adversary plays the role of a malicious vendor. After interacting with the oracles, the adversary outputs the reclaim period and \price of a transaction as well as the identifiers $\tidX{0}, \tidX{1}$ of two different households it wants to try to distinguish (line 4). First, the challenger verifies that the chosen identifiers correspond to honest registered users (line 5). Then the adversary first interacts with $\tidX{b}$ (line 7). The challenger aborts if the transaction fails (line 8). Next, the adversary can again interact with different users (line 9), but not with any cards that are copies of the challenges households. After, the adversary interacts with the other challenge user $\tidX{1 - b}$ (lines 10--11). After both challenge transactions happened, the challenger checks whether the two households both succeeded or both failed their transactions (line 12). Otherwise, the adversary could trivially distinguish them using the output of the \Spend protocol. Finally, the adversary can once again interact with users (this time without any restrictions) before outputting its guess of the challenge bit $b'$ (line 13). The adversary wins if it guesses correctly.
\begin{definition}
The system provides unlinkability if the following advantage is negligible:
 \begin{center}
     $\code{Adv}^{\code{IND}}_{\adv} = |\code{Pr}[\code{Exp}^{\code{IND}}_{\adv, 0}(\secuparam)] - \code{Pr}[\code{Exp}^{\code{IND}}_{\adv, 1}(\secuparam)]|$ 
 \end{center}
\end{definition}
\begin{theorem}\label{th:unlink}
The system provides unlinkability if ORAM is oblivious, the PRF is a secure pseudorandom function, and the vendor does not rewind the database.
\end{theorem}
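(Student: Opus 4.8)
The plan is to pin down the malicious vendor's entire view across the two challenge transactions, observe that the only household-dependent quantities are the ORAM access patterns and the transaction tags $\tau$, and then neutralize both with a short sequence of game hops so that the two challenge transactions become identically distributed regardless of the challenge bit $\myb$.

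First I would enumerate what \adv observes per \Spend execution (Figure~\ref{fig:transaction}): the ORAM read and write interactions and the proof $\pi = (\sigma, \tau, \code{Com}, r)$. Because both challenge transactions are run on the same price and period $\varepsilon^{*}$, the revealed pair $(\code{Com}, r)$ with $\code{Com} = \code{Com.Commit}(\price^{*}, r)$ and fresh $r \in_{\mathcal{R}} \ZZ_q$ is, per transaction, merely a fresh encoding of the common public value and is identically distributed across the two; the signature $\sigma$ is a function of $\tau \parallel \varepsilon^{*} \parallel \code{Com}$ under \sks and hence leaks nothing beyond its message. This isolates $\tau = \code{PRF}_{\prfk}(\Hid \parallel \ctr)$ and the ORAM access pattern as the only carriers of household identity.

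The core is two game hops. In $G_1$ I would replace the PRF by a truly random function, bounding the step by the PRF advantage $\epsilon_{\mathrm{PRF}}$; the reduction routes every tag evaluation through the PRF challenger, which is sound because the key \prfk resides inside the SE and is never exposed, even though \adv learns each \Hid from the registration transcripts. The hypothesis that the vendor does not rewind the database is essential here: since each household's counter \ctr strictly increases and the \Hid are distinct, every queried input $\Hid \parallel \ctr$ is distinct, so the random function yields independent uniform tags. A rewinding vendor could instead force a counter collision, reproduce a previously seen $\tau$, and link trivially -- exactly the attack the no-rewind assumption rules out. In $G_2$ I would replace the real ORAM read/write interactions by the ORAM simulator, bounding the step by the obliviousness advantage $\epsilon_{\mathrm{ORAM}}$ (again relying on the covert, non-rewinding server so that obliviousness applies), after which the access patterns are produced independently of the accessed block and data, hence of the household.

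Finally I would show the advantage in $G_2$ is exactly $0$ by symmetry: swapping $\myb \leftrightarrow 1-\myb$ leaves the whole transcript identically distributed. Each challenge proof now comprises an independent uniform $\tau$, fresh commitment randomness $r$, the induced $\code{Com}$ and $\sigma$, and a simulated access pattern, none depending on which household transacts first; the post-challenge calls on line~13 are symmetric because after both transactions each challenge household's balance has been decremented by the same amount regardless of order, while the line~9 restriction forbids touching copies of the challenge cards in between and the XOR check on line~12 aborts whenever the two success bits $\success_{\myb}, \success_{1-\myb}$ differ, making the observable success/failure outcome symmetric too. Collecting the hops yields $\code{Adv}^{\code{IND}}_{\adv} \le 2(\epsilon_{\mathrm{PRF}} + \epsilon_{\mathrm{ORAM}})$, negligible under the stated assumptions. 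I expect the tag analysis in $G_1$ to be the main obstacle: arguing that the tags reveal nothing requires invoking PRF security and the no-rewind hypothesis simultaneously to guarantee distinct PRF inputs, and it is precisely this coupling that explains why unlinkability can only be achieved against a covert (non-rewinding) adversary.
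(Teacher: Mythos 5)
Your proof is correct and takes essentially the same route as the paper's own sketch: ORAM obliviousness hides which record is accessed, the no-rewind assumption guarantees distinct PRF inputs (strictly increasing \ctr, distinct \Hid) so the tags $\tau$ are pseudorandom, and the commitment randomness, commitments, and signatures (all cards sharing the same \sks) are independent of the household. Your explicit game-hop structure, concrete bound, and closing symmetry argument about post-challenge bookkeeping are a more rigorous rendering of the same argument, not a different one.
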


The rewinding assumption is essential. Since a household's cards cannot communicate, a malicious vendor that can control the database can always win the above game. Consider the following attack. It picks two households with the same budget, and two cards each. It executes a spend transaction for the full budget for each household during the challenge phase. After the challenge phase, it rewinds the database to the state between the two challenge transactions. And then tries to execute a transaction for the full amount with the first household using the so far unused card. Since cards cannot communicate, this card cannot know about any earlier transactions. If the transaction succeeded, the first challenge household was household two, otherwise it was household one. Thus breaking the indistinguishability game. We explain in Section~\ref{sec:disc} how these attacks can be detected. We now provide a proof sketch of Theorem~\ref{th:unlink}.

\begin{proof}[Proof Sketch]
We argue that $\expind[0]$ and $\expind[1]$ are indistinguishable to the adversary. First, notice that as a result of the trusted setup, and the fact that cards verify the key $\sks$ they receive, all cards share the same signing key $\sks$. Also, because the registration is honest, it does not craft extra cards for target households.

We focus on the two challenge transactions. First, notice that in both cases, the ORAM guarantees that the adversary cannot see which records are being read or written. Let $\pi^0 = (\sigma^0, \tau^0, \code{Com}^0, \code{r}^0)$ be the proof sent by $\tidX{0}$ and $\pi^1 = (\sigma^1, \tau^1, \code{Com}^1, \code{r}^1)$ the proof sent by $\tidX{1}$. Since the ORAM is secure,  any adversary that wins the game must distinguish cards based on the proofs $\pi^0$ and $\pi^1$. We now argue that the adversary cannot.

First, we argue that values $\tau^0$ and $\tau^1$ are indistinguishable from the adversary's point of view. By assumption, the adversary does not rewind the database / ORAM. Therefore, any transaction by the same household will always use a different value of $\ctr$. Hence the inputs to the PRF are different for all transactions of the challenge households. Furthermore, the because the adversary does not know the PRF key. We conclude that $\tau^0$ and $\tau^1$ are pseudorandom to the adversary.

Next, direct inspection shows that the the values $\code{r}^0$ and $\code{r}^1$ are independent of the household, and therefore so are the commitments $\code{Com}^0$ and $\code{Com}^1$. Finally, we conclude that  signatures 
$\sigma^0$ and $\sigma^1$ are independent because they are signatures over random inputs with the same signing keys. We conclude that a non-rewinding adversary cannot distinguish the challenge cards.
\end{proof}

\subsection{Privacy at reclaim}
The reclaim station should not be able to know the details of the transactions but only the total amount per vendor~(\reqlinky{reclaim}). As the proofs used for the reclaim phase and the auditing one are the same, privacy at reclaim holds as long as privacy at auditing does.

\subsection{Privacy at auditing}

\begin{algorithm}[t]
    \caption{Audit privacy experiment}\label{alg:audit-exp}
    \begin{algorithmic}[1]
      \footnotesize
      \Function{\oracleregtwoargs}{$\world, \bud, \sks, \tnb$}
        \State $\myst \leftarrow \SetupToken(\pks, \skt, \pkt)$
        \State $\{\myst / \perp, (\Hid, \db^{(\world)})/\db^{(\world)}\}\larrow \langle \Request(\myst), \Allocate(\sks, \bud, \db^{(\world)})\rangle$
        \State $\Bud^{(\world)}[\Hid] \leftarrow \bud$
        \State $\pcfor i \in \tnb$
        \State\hskip\algorithmicindent $\tid \leftarrow \counter^{(\world)} + 1$
        \State\hskip\algorithmicindent $\SMap^{(\world)}[\tid] \leftarrow \myst$
        \State\hskip\algorithmicindent $\code{NewIds} \leftarrow \code{NewIds} \cup \tid$
        \State $\myTH^{(\world)} \leftarrow \myTH^{(\world)} \cup \code{NewIds}$
        \State $\pcreturn \code{NewIds}$
      \EndFunction
      \vspace{1mm}

      \Function{\oraclespendtwoargs}{$\world, \varepsilon, \tid, \price$}
        \State $\myst \leftarrow \SMap^{(\world)}[\tid] \code{ or abort if does not exist}$
        \State $\{(\success, \price', \varepsilon')/\perp, (\pi, \db^{(\world)})/\perp\}\leftarrow\langle \Spend(\myst, \price), \Receive(\varepsilon, \price, \db^{(\world)})\rangle$
        \State $\pcif (\success \wedge \price = \price' \wedge \varepsilon = \varepsilon') \pcthen$
        \State \hskip\algorithmicindent $\inlog^{(\world)}[\varepsilon] \leftarrow \inlog^{(\world)}[\varepsilon] \cup (\price ,\pi)$ 
      \EndFunction
      \vspace{1mm}

      \Function{\expaud}{\secuparam
      } \Comment{Malicious Auditor}
        \State $(\skt, \pkt, \db) \leftarrow \TrustedSetup(\secuparam, N)$
        \State $(\sks, \pks) \leftarrow \SetupSign(\secuparam) $
        
        \State $\varepsilon^* \leftarrow \adv \textsuperscript{\oracleregtwo, \oraclespendtwo} (\pks)$
        \State $\spendsumz, \pi^0_{\code{aud}} \leftarrow \GenProof (\varepsilon^*, \inlog^0[\varepsilon^*])$
        \State $\spendsumo, \pi^1_{\code{aud}} \leftarrow \GenProof (\varepsilon^*, \inlog^1[\varepsilon^*])$
        \State $\pcif \spendsumz \neq \spendsumo \vee |\pi^0_{\code{aud}}| \neq |\pi^1_{\code{aud}}|
\pcreturn \perp$
        \State $\myb' \leftarrow \adv(\pi^\myb_{\code{aud}})$
        \State $\pcreturn \myb' = \myb$
      \EndFunction
      \vspace{1mm}
    \end{algorithmic}
\end{algorithm}

The auditor should not be able to know the details of the transactions but only the total amount per vendor~(\reqlinky{audit}). We follow Wang et al.~\cite{wang2023digital} and model this using the auditability experiment from Algorithm~\ref{alg:audit-exp} inspired by Benaloh's ballot privacy game \cite{bernhard2015ballot}. 
We want to model that the auditor learns nothing more than the total amount of money spent, and the number of transactions. The key idea of the game is to let the adversary build up two sets of transactions, we call these sets \emph{worlds}, where recipients use their cards to spend their budget with a vendor. Contrary to Wang et al., the adversary is fully free to register parties (using $\oracleregtwoargs(\cdot)$) in either world, and to subsequently ask specific cards of a household to make a spend transaction (using \oraclespendtwo).

After interacting with the oracles, the adversary outputs a challenge reclaim period $\varepsilon^*$. First, the challenger computes the sum of transactions and the auditable proof for the chosen period in each world. If the sums or the length of the proofs are different the game aborts as the adversary can trivially distinguish them. Finally, the challenger chooses one of the two worlds and gives the corresponding proof to the adversary. This one has to guess which world the challenger chose. It wins if it guesses correctly.
\begin{definition}
The system provides audit privacy if the following advantage is negligible:
 \begin{center}
     $\code{Adv}^{\code{AUDP}}_{\adv} = |\code{Pr}[\code{Exp}^{\code{AUDP}}_{\adv, 0}(\secuparam)] - \code{Pr}[\code{Exp}^{\code{AUDP}}_{\adv, 1}(\secuparam)]|$ 
 \end{center}
\end{definition}
\begin{theorem}
The system provides privacy if the commitment scheme is hiding, and the PRF is secure.
\end{theorem}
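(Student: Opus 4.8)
The plan is to show that a malicious auditor's entire view is distributed identically in the two worlds, up to the security of the PRF, so that it cannot guess the challenge bit better than chance. The first observation I would make is that the auditor's view is extremely limited: the registration oracle \oracleregtwo only returns freshly assigned card identifiers, which are independent of any secret material, and the spend oracle \oraclespendtwo runs both \Spend and \Receive internally and merely records the resulting tuple $(\price, \pi)$ in $\inlog^{(\world)}$ without returning it. Hence the only cryptographic object the adversary ever sees is the single aggregated proof $\aggproof = (\code{r}_\code{sum}, (\sigma^{(i)}, \tau^{(i)}, \code{Com}^{(i)})_i)$ produced by \GenProof for the challenge world. Crucially, \GenProof discards the individual openings $\code{r}^{(i)}$ and keeps only their sum $\code{r}_\code{sum}$. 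The whole argument therefore reduces to showing that the distribution of this aggregated proof depends only on the total amount (forced equal by $\spendsumz = \spendsumo$), the number of transactions (forced equal by the length check), the challenge period $\varepsilon^*$, and the common signing key $\sks$.

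Next I would handle the tags with a single hybrid step. Each tag is $\tau^{(i)} = \code{PRF}_{\prfk}(\Hid \parallel \ctr)$, and within one world every transaction uses a distinct $(\Hid, \ctr)$ pair (the counter strictly increases per household, and distinct households have distinct identifiers, under an injective encoding). Replacing $\code{PRF}_{\prfk}$ by a truly random function makes the tags independent and uniform; this costs at most the PRF distinguishing advantage, via a reduction that answers all tag queries with its PRF oracle and simulates everything else honestly. The reduction only needs correct tags for the revealed world, since the abort test depends solely on the sums and the transaction counts, not on tag values. After this step the tag component is a fixed number of i.i.d. uniform strings in both worlds, hence identically distributed.

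The technical heart of the argument, and the step I expect to be the main obstacle, is showing that the commitment component together with $\code{r}_\code{sum}$ leaks nothing beyond the total. Here I would use the perfect hiding and homomorphic structure of Pedersen directly: for any fixed price vector of length $n$ summing to $S$ with uniform openings $\code{r}^{(i)}$, the tuple $((\code{Com}^{(i)})_i, \code{r}_\code{sum})$ is distributed exactly as $n-1$ independent uniform group elements $\code{Com}^{(1)},\dots,\code{Com}^{(n-1)}$, a uniform $R = \code{r}_\code{sum}$, and the single determined element $\code{Com}^{(n)} = g^{S} h^{R} \big/ \prod_{i<n}\code{Com}^{(i)}$. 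This distribution manifestly depends only on $S$ and $n$, never on the individual prices, so the commitments and $\code{r}_\code{sum}$ coincide across the two worlds whenever the sums and proof lengths match. I would establish the reparametrization by noting that the map $(\code{r}^{(1)},\dots,\code{r}^{(n)}) \mapsto (\code{r}^{(1)},\dots,\code{r}^{(n-1)}, \sum_i \code{r}^{(i)})$ is a bijection on $\ZZ_q^{n}$ and substituting into the commitment equations.

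Finally I would close the argument for the signatures: each $\sigma^{(i)}$ is a signature under the common key $\sks$ over $\tau^{(i)} \parallel \varepsilon^* \parallel \code{Com}^{(i)}$, computed from objects already shown to be identically distributed (together with fresh independent signing randomness), so it contributes no additional distinguishing power and requires no security assumption on \DS. Combining the three components, $\aggproof$ is identically distributed in \expaud[0] and \expaud[1] once the PRF is idealized, so the overall advantage is bounded by twice the PRF distinguishing advantage and is therefore negligible.
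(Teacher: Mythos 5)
Your proof is correct, and it follows the same decomposition as the paper's own sketch: show that the three components of the aggregated proof $\aggproof = (\code{r}_\code{sum}, (\sigma^{(i)}, \tau^{(i)}, \Com^{(i)})_i)$ are each distributed independently of the world bit, using PRF security for the tags (distinct $\Hid \parallel \ctr$ inputs within a world), hiding of the Pedersen commitments, and the fact that both worlds' cards sign under the same key $\sks$. The substantive difference is one of completeness: the paper explicitly omits the step of showing that revealing $\code{r}_\code{sum}$ alongside the commitments is harmless, deferring it to Wang et al., whereas you prove it. Your reparametrization argument --- that $(\code{r}^{(1)},\dots,\code{r}^{(n)}) \mapsto (\code{r}^{(1)},\dots,\code{r}^{(n-1)},\sum_i \code{r}^{(i)})$ is a bijection on $\ZZ_q^n$, so the tuple $((\Com^{(i)})_i, \code{r}_\code{sum})$ is distributed as $n-1$ independent uniform group elements, a uniform scalar $R$, and the single determined element $g^{S}h^{R}\big/\prod_{i<n}\Com^{(i)}$ --- is exactly the missing lemma, and it makes visible that the leaked information is precisely the total $S$ and the count $n$, which the experiment's abort conditions force to be equal across worlds. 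You also make explicit two points the paper leaves implicit: that the oracles return no cryptographic material (so the adversary's whole view reduces to the single challenge proof), and the hybrid structure yielding a concrete bound of twice the PRF distinguishing advantage, with the signature component needing no assumption on \DS at all. What the paper's route buys is brevity by citation; what yours buys is a self-contained and quantitatively explicit proof of the same theorem.
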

\begin{proof}[Proof Sketch]
In this proof sketch we focus on showing that the lists of transactions provided to the adversary are indistinguishable.
Recall that a transaction proof is of the form $\pi^b_{\code{aud}} = (\code{r}_\code{sum}, (\sigma^{(i)}, \tau^{(i)}, \code{Com}^{(i)})_{i})$, where each $\sigma^{(i)}$ is a signature over $\tau^{(i)} \parallel \varepsilon^* \parallel \Com^{(i)}$. We argue that these are all independent of the choice of world $b$.

First, by construction, cards in either world are registered with the same key $\sks$, hence the signatures $\sigma^{(i)}$ themselves do not give information about $b$ as long as the message it signs does not. The commitment $\Com^{(i)}$ is perfectly hiding, and $\varepsilon^{*}$ is the same in both worlds. This leaves the PRF outputs $\tau^{(i)}$. Note that each input $\Hid \parallel \ctr$ is used at most once in a single world; hence because the PRF is secure and the adversary does not know the key, the $\tau^{(i)}$ are pseudorandom as well.

To complete the proof requires showing that the above still holds even though we reveal $\code{r}_\code{sum}$ as well. However, the argument that this works as long as the commitment scheme is hiding is exactly the same as in Wang et al.~\cite{wang2023digital}, so we omit it here.
\end{proof}

\section{Deployment in Practice}
\label{sec:disc}
Here, we discuss practical considerations, extensions, and real-world challenges. 

\descr{Protecting against rollback attacks.} As we explained in Section~\ref{sec:privacy-unlink}, a fully malicious vendor with control over the database could roll-back the database to an earlier state and use this to distinguish users. We cannot guarantee that cards can always detect this attack, but we can make a simple modification that ensures that misbehaving vendors will likely be detected. In reality, vendors are unlikely to misbehave if that would lead them to be caught: the cost would be too high. The humanitarian organization can exclude them from the system going forward. Therefore we make the assumption that vendors are \emph{covert}.

The protection is simple. Every card locally keeps track of the last value of $\ctr$ that they wrote to the ORAM. Whenever they receive an older value (indicating a roll-back) they record the violation and stop working. Since all cards are identical from the vendor's perspective up to the point of the ORAM read operation, a misbehaving vendor cannot know whether it is dealing with a card that could potentially detect a rollback or another card of the same household that cannot. If the vendor performs the roll-back anyway, it risks being caught. Therefore a covert adversary will never perform a roll-back. (Incidentally, the solution by Wang et al.~\cite{wang2023digital} implicitly makes similar assumptions to avoid roll-backs of the block list.)

\descr{Periodicity.} Requirement \reqlinky{period} calls for periodical budget allocations: i.e., the budget is renewed after each period. The precise details depend on the requirements of the humanitarian organizations, e.g., whether unspent budget expires, or is carried over to the next round. In either case, we can make a simple modification to our algorithms to support these cases. First, we modify the cards to locally track what is the periodic allowance and when this allowance should be added to the current balance (or topped up). Next, we modify the stored record in the database to include when the balance was last updated.

Every time a card retrieves the current balance it first checks if the balance should be updated (e.g., because we now entered a new period). If so, it will apply the update rule (either add the periodic allowance, or fill-up the balance to the maximum) before proceeding. When writing the new balance it will also update the last-update value so that other cards of the same household will not also add new balance.

\descr{Vendor(s) requirements and the limits of several vendors.} 
For deploying our solution, the hardware requirement for vendors is limited to a card reader and a low-powered computer (\eg a Raspberry Pi), and, for the multi-vendor setting, network connectivity.

We note that in the multi-vendors setting, the main challenge is the synchronization of the database. In our presentation, we have assumed that all vendors and the registration station interact with the same database. This requires connectivity between them, or between these parties and an external (untrusted) server that hosts the database (a Raspberry Pi would also suffice for hosting this database). The lack of support for concurrent access in our protocols limits the level to which we can scale this approach without having to reduce privacy by creating different ORAM shards.

Communication between vendors, however, is essential. By definition, cards of the same household cannot communicate themselves, so we need an external mechanism to synchronize their state. If there are multiple vendors, these vendors must then interact among each other, lest a household can double-spend their budget at several vendors. 

\descr{Alternative design to replace reclaiming phase.} In this work, we followed the architecture of Wang \etal~\cite{wang2023digital} to let vendors prove how much money they can reclaim based on individual transactions. This approach provides flexibility. Vendors can retroactively decide on the granularity of the proofs (e.g., a day, week, or month); and even recompute these if needed. The downside is that this requires the use of homomorphic commitments.

If such flexibility is not needed, a simpler approach is possible. Since the smart cards are considered an SE, and thus we can trust the execution to always be honest, we could instead use them to update a per-vendor balance.\footnote{We thank the anonymous reviewer for suggesting this approach.} The balance record consists of the balance, a per-period nonce, and a card signature over the first two. For each transaction, the card retrieves the latest signed vendor balance, verifies the signature, updates the balance with the latest transaction amount, signs the new balance and the current per-period nonce, and sends it back to the vendor. The vendor can reclaim what it is owed by simply sending the signed balance and per-period nonce to the reclaim station. The reclaim station verifies the signature and that the nonce is fresh before reimbursing the vendor. The use of the nonce and the freshness check ensures that a vendor cannot overclaim by continuing a previous period. To start a new period, the vendor sets the balance to zero. The first card to encounter a zero balance will not verify the missing signature, create a new nonce, and then proceed as before to return a signed balance with the new nonce.

\section{Evaluation}
\label{sec:eval}
We evaluate our system to show that it scales to reasonable numbers. We focus on the balance retrieval and update part of the transaction phase as this is the most costly aspect of our design. We evaluate the performance cost on a NXP J3H145 dual
144k Java Card smart card (the same one that was used in prior work~\cite{wang2023digital}). Our code is open source and can be found here: \url{https://github.com/wouterl/humanitarian-wallet-code}.

We split the analysis of the cost into two parts: (1) the cost of the ORAM access which we evaluate using our new implementation and (2) the cost of all the other computations (PRF, signature, commitments) for which we use prior numbers~\cite{wang2023digital}.

\descr{The cost of ORAM accesses.} We first evaluate the cost of retrieving the latest balance and counter from the ORAM, and then writing the updated values back. We use 16 bits for the counter and 16 bits for the balance. We store these in a single 4-byte field. Throughout this section, we show mean transaction times based on 10 runs and compute the standard error of the mean (but in all cases the standard error was too small to be visible in the graphs and is thus not shown).

We evaluate the cost of two ORAMs. The first is the \emph{naive ORAM} where the database is a single array of 4-byte values encrypted with authenticated encryption. Cards simply download the full ciphertext, decrypt the household's balance at the household's position, modify it, and upload a new fresh encryption. See Appendix~\ref{ap:oram} for the details. Since our smart card does not natively support authenticated encryption, we implement it using the encrypt-then-MAC paradigm using AES in CBC mode for encryption and CMAC with AES for authentication.

Figure~\ref{fig:naive-perf} shows the performance of our proof-of-concept implementation on the NXP J3H145 dual 144k Java Card. We measured end-to-end time as well as, separately, the time it takes to send and retrieve data from the card when it does not perform any further processing. We observe that it takes less than $20$s to perform a transaction for just over $8,000$ households. This cost increases linearly with the number of households.

Due to the linear cost of naive ORAM, at some point, downloading and processing the full database becomes prohibitively costly. We therefore implemented and evaluated a \emph{recursive tree-based ORAM}~\cite{shi2011treeoram}. Since we access this ORAM using different cards, we must store the position map in the ORAM as well. To get integrity, we use authenticated encryption at each node of the tree, see Appendix~\ref{ap:oram} for the details. Again we used AES in CBC mode for encryption and CMAC with AES for authentication. Figure~\ref{fig:tree-perf} shows the total cost as well as the transfer cost. We observe that both runtime and communication overheads are sub-linear in the number of households. 

Figure~\ref{fig:comparison-perf} compares the performances of one transaction for both ORAM instantiations. We observe that the recursive tree-based instantiation becomes more efficient starting with 35k households, and then scales much better. Part of this cost-saving is due to a reduction in download cost. At $2^{15}$ households, the recursive ORAM has a total transfer size of only 114\,kB compared to 287\,kB for the naive ORAM. These transfers induce a significant time cost, see Figure~\ref{fig:comparison-transfer}.

\begin{figure*}
\centering
\begin{subfigure}[t]{0.49\textwidth}
    \centering
    \includegraphics[scale=0.5]{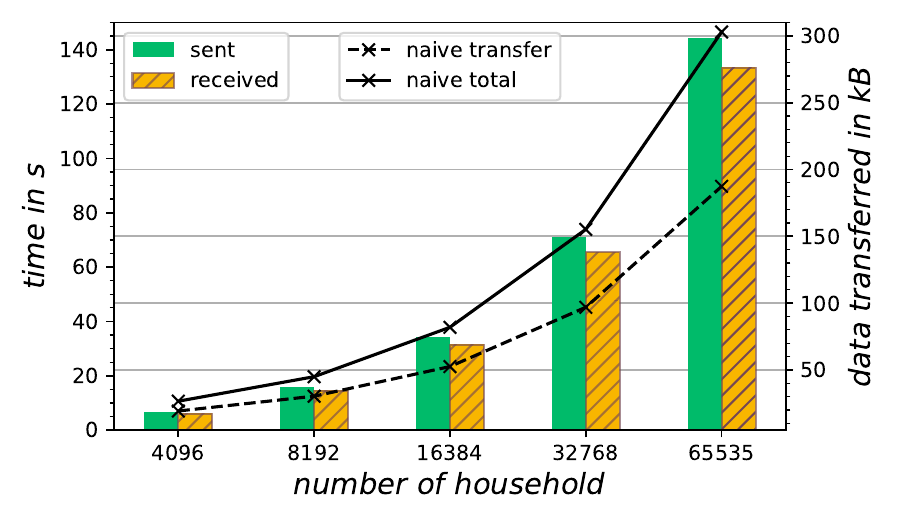}
%    \captionsetup{font=small} 
    \caption{Naive ORAM costs per number of households.}
    \label{fig:naive-perf}
\end{subfigure}
\begin{subfigure}[t]{0.49\textwidth}
    \centering
    \includegraphics[scale=0.5]{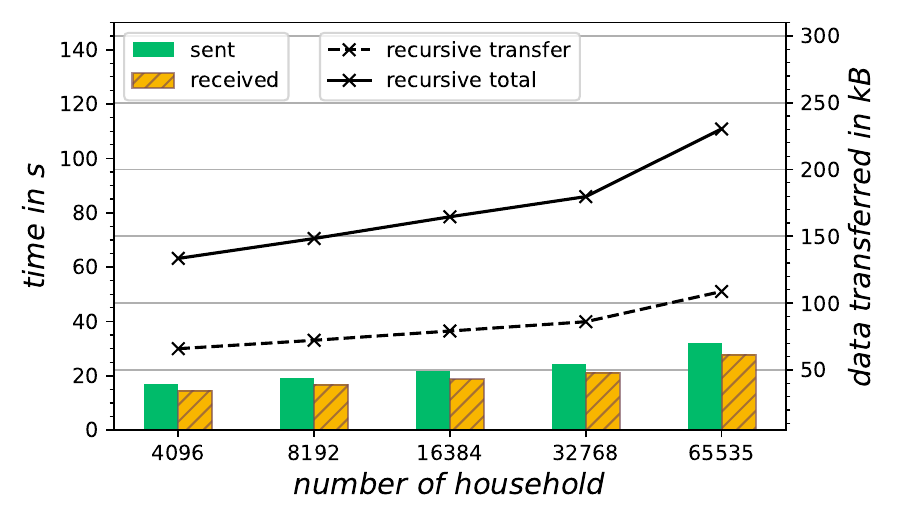}
%    \captionsetup{font=small} 
    \caption{Recursive tree-based ORAM costs per number of households.}
    \label{fig:tree-perf}
\end{subfigure}
\caption{Evaluation of ORAM costs per transaction. The plain line corresponds to the total ORAM execution runtime (in s) while the dashed line corresponds to the transfer time only. The bar plots refer to the measure of the communication costs (sent/received by the card) in kB.}
\label{fig:evaluation}
\end{figure*}

\begin{figure*}
\centering
\begin{subfigure}[t]{0.49\textwidth}
    \centering
    \includegraphics[scale=0.5]{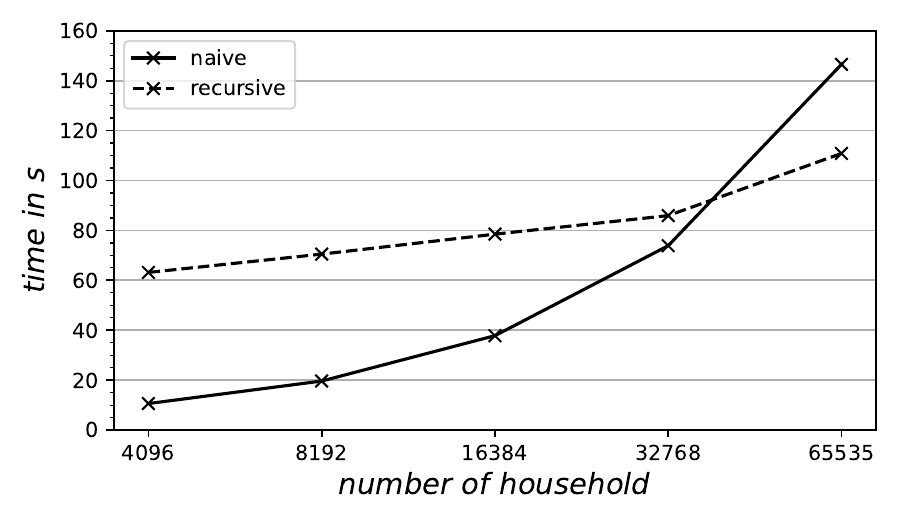}
%    \captionsetup{font=small} 
    \caption{Total costs: Naive vs. recursive tree-based ORAM.}
    \label{fig:comparison-perf}
\end{subfigure}
\begin{subfigure}[t]{0.49\textwidth}
    \centering
    \includegraphics[scale=0.5]{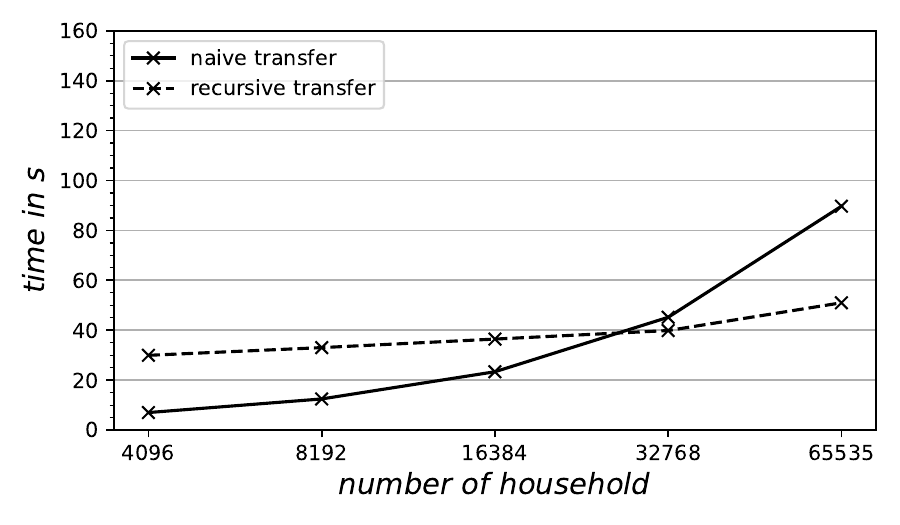}
%    \captionsetup{font=small} 
    \caption{Transfer costs: Naive vs. recursive tree-based ORAM.}
    \label{fig:comparison-transfer}
\end{subfigure}
        
\caption{Comparison of ORAM costs per transaction. The plain line corresponds to the naive ORAM while the dashed line corresponds to the recursive tree-based ORAM. }
\label{fig:comparison}
\end{figure*}

\descr{The other computations.}
Finally, we need to add the cost of the other computations.
As in Wang et al.~\cite{wang2023digital} we use ECDSA (P256 curve) for the digital signature scheme and use a slightly tweaked Pedersen's commitment~\cite{wang2023digital}.
From prior work~\cite{wang2023digital} we know that it takes 500\,ms for computing the Pedersen commitment, 166\,ms for the signature, and 113\,ms for computing the PRF. To obtain end-to-end cost, one must add around 800\,ms to the results from Figure~\ref{fig:evaluation}.

\descr{Transfer cost.}
The graphs in Figure~\ref{fig:evaluation} also show the transfer cost in bytes between the card and the card reader. These numbers include the APDU headers, thus explaining why more data is sent to the card than received from it in Figure~\ref{fig:naive-perf}.
The numbers shown in Figure~\ref{fig:evaluation} also give an upper-bound on the amount of data that must be transmitted by vendors when using a central database. Even with 128k households, this amounts to less than 150\,kB when using recursive ORAM.

\descr{Practicality.} These numbers show that for small distribution programs of up to a few thousand households, our approach is practical: the total cost of a single transaction is under 10 seconds for up to 2,000 households. For larger distribution programs the transaction time remains under 1 minute, but we would need faster cards to make these sizes practical.

We also observe that the communication time is high compared to the expected transfer speeds quoted for the cards we use. We measured an effective speed of 5--10 kBps, whereas based on the datasheets we would have expected an order of magnitude more. A better implementation (or better hardware) that realizes the expected speed would yield 2-3$\times$ faster end-to-end times.

\section{Conclusion}
\label{sec:conc}
Humanitarian organizations are increasingly interested in digitalization to gain efficiency in their processes. To enable them to do so it is important to develop digital solutions that are tailored to their needs, often different than the requirements of our daily life applications.

In this work, we designed a system to digitize humanitarian cash distribution workflow in collaboration with the \ICRC. Our new auditable digital wallet scheme enables the \ICRC\ to maintain the privacy properties enabled by previous work on aid distribution, while allocating a flexible budget to aid recipients such that they can obtain the items they need the most at every point in time. 
Our design uses smart cards as tokens and relies on different flavors of an ORAM protocol to ensure versatility in the field. We conducted a thorough security and privacy analysis and evaluated the performance of our solution showing its practicality at the scale needed in the humanitarian sector.

\section*{Acknowledgment}
We would like to thank Kevin Gni for helping with the prototype code and evaluation.

Initial work for this paper was funded by the Science and
Technology for Humanitarian Action Challenges (HAC) programme from the Engineering for Humanitarian Action
initiative, a partnership between the ICRC, EPFL and ETHZ.

\bibliographystyle{IEEEtranS}
\bibliography{literature}

% Generated by IEEEtranS.bst, version: 1.14 (2015/08/26)
\begin{thebibliography}{10}
\providecommand{\url}[1]{#1}
\csname url@samestyle\endcsname
\providecommand{\newblock}{\relax}
\providecommand{\bibinfo}[2]{#2}
\providecommand{\BIBentrySTDinterwordspacing}{\spaceskip=0pt\relax}
\providecommand{\BIBentryALTinterwordstretchfactor}{4}
\providecommand{\BIBentryALTinterwordspacing}{\spaceskip=\fontdimen2\font plus
\BIBentryALTinterwordstretchfactor\fontdimen3\font minus
  \fontdimen4\font\relax}
\providecommand{\BIBforeignlanguage}[2]{{%
\expandafter\ifx\csname l@#1\endcsname\relax
\typeout{** WARNING: IEEEtranS.bst: No hyphenation pattern has been}%
\typeout{** loaded for the language `#1'. Using the pattern for}%
\typeout{** the default language instead.}%
\else
\language=\csname l@#1\endcsname
\fi
#2}}
\providecommand{\BIBdecl}{\relax}
\BIBdecl

\bibitem{AlmashaqbehS22SoKblockchain}
G.~Almashaqbeh and R.~Solomon, ``{SoK}: Privacy-preserving computing in the
  blockchain era,'' in \emph{{IEEE} European Symposium on Security and Privacy
  (EuroS{\&}P)}, 2022, \url{https://doi.org/10.1109/EuroSP53844.2022.00016}.

\bibitem{DBLP:conf/pkc/BaldimtsiCFK15}
F.~Baldimtsi, M.~Chase, G.~Fuchsbauer, and M.~Kohlweiss, ``Anonymous
  transferable e-cash,'' in \emph{Public-Key Cryptography ({PKC})}, 2015,
  \url{https://doi.org/10.1007/978-3-662-46447-2\_5}.

\bibitem{DBLP:conf/sp/Ben-SassonCG0MTV14}
E.~Ben{-}Sasson, A.~Chiesa, C.~Garman, M.~Green, I.~Miers, E.~Tromer, and
  M.~Virza, ``Zerocash: Decentralized anonymous payments from bitcoin,'' in
  \emph{{IEEE} Symposium on Security and Privacy ({SP})}, 2014,
  \url{https://doi.org/10.1109/SP.2014.36}.

\bibitem{bernhard2015ballot}
D.~Bernhard, V.~Cortier, D.~Galindo, O.~Pereira, and B.~Warinschi, ``{SoK}: {A}
  comprehensive analysis of game-based ballot privacy definitions,'' in
  \emph{{IEEE} Symposium on Security and Privacy ({SP})}, 2015,
  \url{https://doi.org/10.1109/SP.2015.37}.

\bibitem{BurtonICRC20}
J.~Burton, ``{“Doing no harm” in the digital age: What the digitalization
  of cash means for humanitarian action},''
  \url{https://international-review.icrc.org/sites/default/files/reviews-pdf/2021-03/doing-no-harm-digitalization-of-cash-humanitarian-action-913.pdf},
  2020, accessed: June 6th, 2024.

\bibitem{camenisch2005compact}
J.~Camenisch, S.~Hohenberger, and A.~Lysyanskaya, ``Compact e-cash,'' in
  \emph{Advances in Cryptology - {EUROCRYPT}}, 2005,
  \url{https://doi.org/10.1007/11426639\_18}.

\bibitem{camenisch2006balancing}
------, ``Balancing accountability and privacy using e-cash (extended
  abstract),'' in \emph{Security and Cryptography for Networks (SCN)}, 2006,
  \url{https://doi.org/10.1007/11832072\_10}.

\bibitem{DBLP:conf/acns/CanardG08}
S.~Canard and A.~Gouget, ``Anonymity in transferable e-cash,'' in \emph{Applied
  Cryptography and Network Security ({ACNS})}, 2008,
  \url{https://doi.org/10.1007/978-3-540-68914-0\_13}.

\bibitem{DBLP:conf/pkc/CanardPST15}
S.~Canard, D.~Pointcheval, O.~Sanders, and J.~Traor{\'{e}}, ``Divisible e-cash
  made practical,'' in \emph{Public-Key Cryptography - ({PKC})}, 2015,
  \url{https://doi.org/10.1007/978-3-662-46447-2\_4}.

\bibitem{Chaum82ecash}
D.~Chaum, ``Blind signatures for untraceable payments,'' in \emph{Advances in
  Cryptology - {CRYPTO}}, 1982,
  \url{https://doi.org/10.1007/978-1-4757-0602-4\_18}.

\bibitem{ChaumFN88ecash}
D.~Chaum, A.~Fiat, and M.~Naor, ``Untraceable electronic cash,'' in
  \emph{Advances in Cryptology - {CRYPTO}}, 1988,
  \url{https://doi.org/10.1007/0-387-34799-2\_25}.

\bibitem{ciesielski2022afghans}
R.~Ciesielski and M.~Zierer, ``How biometric devices are putting afghans in
  danger,'' 2022,
  \url{https://interaktiv.br.de/biometrie-afghanistan/en/index.html}.

\bibitem{DBLP:conf/ndss/DanezisM16}
G.~Danezis and S.~Meiklejohn, ``Centrally banked cryptocurrencies,'' in
  \emph{Network and Distributed System Security Symposium ({NDSS})}, 2016,
  \url{http://wp.internetsociety.org/ndss/wp-content/uploads/sites/25/2017/09/centrally-banked-cryptocurrencies.pdf}.

\bibitem{goldie2022partisia}
A.~Goldie, ``Partisia blockchain partners with the international committee of
  the {Red Cross} to address real-world challenges in conflict zones,'' 2022,
  \url{https://medium.com/partisia-blockchain/partisia-blockchain-partners-with-the-international-committee-of-the-red-cross-to-address-af05d4a213fb}.

\bibitem{Goldreich87oram}
O.~Goldreich, ``Towards a theory of software protection and simulation by
  oblivious {RAMs},'' in \emph{{ACM} Symposium on Theory of Computing
  ({STOC})}, 1987, \url{https://doi.org/10.1145/28395.28416}.

\bibitem{icrc2019ecosec}
{ICRC}, ``Ecosec response,'' 2019,
  \url{https://shop.icrc.org/ecosec-response-en-pdf.html}.

\bibitem{icrc2023partisia}
------, ``Humanitarian token solution: Digital cash assistance that preserves
  privacy,'' 2023,
  \url{https://blogs.icrc.org/inspired/2023/06/27/humanitarian-token-solution-digital-cash-assistance-preserves-privacy/}.

\bibitem{ifrc2021identity}
{IFRC (International Federation of Red Cross and Red Crescent Societies},
  ``Digital identity: enabling dignifies access to humanitarian services in
  migration,'' 2021,
  \url{https://preparecenter.org/wp-content/uploads/2021/06/Digital-Identity-Enabling-dignified-access-to-humanitarian-services-in-Migration-Final.pdf}.

\bibitem{KaspersenICRC16}
A.~Kaspersen and C.~Lindsey-Curtet, ``{The digital transformation of the
  humanitarian sector},''
  \url{https://blogs.icrc.org/law-and-policy/2016/12/05/digital-transformation-humanitarian-sector/},
  2016, accessed: June 6th, 2024.

\bibitem{DBLP:conf/ccs/KiayiasKS22}
A.~Kiayias, M.~Kohlweiss, and A.~Sarencheh, ``{PEReDi}: Privacy-enhanced,
  regulated and distributed central bank digital currencies,'' in \emph{{ACM}
  {SIGSAC} Conference on Computer and Communications Security ({CCS})}, 2022,
  \url{https://doi.org/10.1145/3548606.3560707}.

\bibitem{MiersG0R13}
I.~Miers, C.~Garman, M.~Green, and A.~D. Rubin, ``Zerocoin: Anonymous
  distributed {E-Cash} from bitcoin,'' in \emph{{IEEE} Symposium on Security
  and Privacy ({SP})}, 2013, \url{https://doi.org/10.1109/SP.2013.34}.

\bibitem{aljazeera2019kenya}
O.~Osman, ``Kenyan somali refugees claim they are denied citizenship rights,''
  Al Jazeera, 2019,
  \url{https://www.aljazeera.com/features/2019/5/19/kenyan-somali-refugees-claim-they-are-denied-citizenship-rights}.

\bibitem{pedersen1991commit}
T.~P. Pedersen, ``Non-interactive and information-theoretic secure verifiable
  secret sharing,'' in \emph{Advances in Cryptology - {CRYPTO}}, 1991,
  \url{https://doi.org/10.1007/3-540-46766-1\_9}.

\bibitem{RinbergA22SoKanonymity}
R.~Rinberg and N.~Agarwal, ``Privacy when everyone is watching: An {SOK} on
  anonymity on the blockchain,'' \emph{{IACR} Cryptol. ePrint Arch.}, 2022,
  \url{https://eprint.iacr.org/2022/985}.

\bibitem{ShepherdAGLMASC16}
C.~Shepherd, G.~Arfaoui, I.~Gurulian, R.~P. Lee, K.~Markantonakis, R.~N. Akram,
  D.~Sauveron, and E.~Conchon, ``Secure and trusted execution: Past, present,
  and future - {A} critical review in the context of the internet of things and
  cyber-physical systems,'' in \emph{{IEEE} Trustcom/BigDataSE/ISPA}, 2016,
  \url{https://doi.org/10.1109/TrustCom.2016.0060}.

\bibitem{shi2011treeoram}
E.~Shi, T.~H. Chan, E.~Stefanov, and M.~Li, ``Oblivious ram with $o((log n)^3)$
  worst-case cost,'' \emph{{IACR} Cryptol. ePrint Arch.}, 2011,
  \url{http://eprint.iacr.org/2011/407}.

\bibitem{stefanov2018pathoram}
E.~Stefanov, M.~van Dijk, E.~Shi, T.~H. Chan, C.~W. Fletcher, L.~Ren, X.~Yu,
  and S.~Devadas, ``Path {ORAM:} an extremely simple oblivious {RAM}
  protocol,'' \emph{J. {ACM}}, vol.~65, no.~4, 2018,
  \url{https://doi.org/10.1145/3177872}.

\bibitem{UngerDBFPG015SoKmessaging}
N.~Unger, S.~Dechand, J.~Bonneau, S.~Fahl, H.~Perl, I.~Goldberg, and M.~Smith,
  ``{SoK}: Secure messaging,'' in \emph{{IEEE} Symposium on Security and
  Privacy ({SP})}, 2015, \url{https://doi.org/10.1109/SP.2015.22}.

\bibitem{wang2023digital}
B.~Wang, W.~Lueks, J.~Sukaitis, V.~G. Narbel, and C.~Troncoso, ``Not yet
  another digital {ID:} privacy-preserving humanitarian aid distribution,'' in
  \emph{{IEEE} Symposium on Security and Privacy ({SP})}, 2023,
  \url{https://doi.org/10.1109/SP46215.2023.10179306}.

\bibitem{wfp2023buildingblocks}
{World Food Program}, ``Building blocks blockchain network for humanitarian
  assistance - graduated project,'' 2023,
  \url{https://innovation.wfp.org/project/building-blocks}.

\bibitem{DBLP:conf/fc/WustKCC19}
K.~W{\"{u}}st, K.~Kostiainen, V.~Capkun, and S.~Capkun, ``{PRCash}: Fast,
  private and regulated transactions for digital currencies,'' in
  \emph{Financial Cryptography and Data Security ({FC})}, 2019,
  \url{https://doi.org/10.1007/978-3-030-32101-7\_11}.

\bibitem{DBLP:conf/ccs/WustKDC22}
K.~W{\"{u}}st, K.~Kostiainen, N.~Delius, and S.~Capkun, ``Platypus: {A} central
  bank digital currency with unlinkable transactions and privacy-preserving
  regulation,'' in \emph{{ACM} {SIGSAC} Conference on Computer and
  Communications Security ({CCS})}, 2022,
  \url{https://doi.org/10.1145/3548606.3560617}.

\end{thebibliography}

\appendices
%\begin{appendices}

\section{ORAM}\label{ap:oram}
ORAM protocols enable a client to hide her access pattern to an external memory thus helping to guarantee privacy. In our system, the smart card uses ORAM to find its current balance in the vendor's database without disclosing its access pattern. This way, the smart card does not have to reveal to which household it belongs. Depending on the size of the community, we have to adapt the way we instantiate \ORAM.

\descr{Authenticated Encryption.}
Let \code{AE} be an authenticated encryption scheme s.t. $\code{AE}\xspace {:=} (\KGen, \Enc, \Dec)$. The key generation algorithm $k \gets \AEnc.\KGen(\secuparam)$ takes as input a security parameter $\ell$ and outputs a secret key $k$. The encryption algorithm $\cipher {\larrow} \AEnc.\Enc(k, \mess)$ takes as input the key $k$ and a message $\mess {\in} \{0, 1\}^*$ and outputs a ciphertext $\cipher$. The decryption algorithm $\mess {\larrow} \AEnc.\Dec(k, \cipher)$ returns a message $\mess$ or an error $\perp$.

\descr{Naive ORAM for small community}
We now describe the \textit{naive ORAM} protocol. The idea of this protocol is to store the database in encrypted form on the server. The client downloads all of it, decrypts it, reads its record, updates it, re-encrypts the database, and sends it back to the server. We instantiate \ORAM as follows:
\begin{itemize}
    \item (\sk, \db) \larrow \ORAM.\Init(\secuparam, \N). The trusted party sets up the authenticated encryption scheme by running $k \gets \AEnc.\KGen(\secuparam)$, and sets $\skt=k$. Then, it initializes an array $\code{data}$ of \N empty elements, (\N being the number of households) and encrypts it into ciphertext $\cipher$ using $\AEnc.\Enc(\skt, \code{data})$. Let $\db = \cipher$. It returns (\sk, \db).
    \item \db \larrow \ORAM.\Serve(\db). The server sends the encrypted database \db to the client. Then, the server waits until the client sends back the updated database and stores it.
    \item \code{data} \larrow \ORAM.\Read(\skt, \Hid). The client waits to receive the encrypted database \db. She decrypts the \db using $\code{data} = \AEnc.\Dec(\skt, \db)$, and any tampering to the database results in the decryption to return $\bot$. The client reads the record $\code{data}[\Hid]$ stored in the $\Hid$-th component of the array $\code{data}$. She re-encrypts the database as $\db' = \AEnc.\Enc(\skt, \code{data})$. Finally, she sends back the encrypted $\db'$ to the server.
    \item $\perp \larrow \ORAM.\Write(\skt, \Hid, \code{record})$. The client waits to receive the database $\db$. She decrypts it using $\code{data} = \AEnc.\Dec(\skt, \db)$. The client then writes her record by setting $\code{data}[\Hid] = \code{record}$ into the $\Hid$-th position of the plaintext array $\code{data}$ and re-encrypts it into $\db' = \AEnc.\Enc(\skt, \code{data})$. Finally, she sends back the $\db'$ to the server.
\end{itemize}
We employ this version of ORAM in the case of a small community. The client is the registered user's smart card and the registration station and the vendor play the server's role.

\descr{Tree-based (recursive) ORAM for bigger communities}
When the database is big, the naive ORAM is surpassed by more elaborated constructions that avoid retrieving the whole database. In our evaluation, we experimented with tree-based ORAM protocol~\cite{shi2011treeoram, stefanov2018pathoram} and recursive versions of these. The challenge with these efficient ORAM constructions is that to hide which records are read, while not requiring reading the entire database, they must move to different places in the underlying database lest repeated queries leak. Therefore tree-based ORAMs use a position map that tells clients where the blocks are, however, the position map itself is rather large, and cannot be stored by the clients in our case. The solution that we adopted is to use recursive ORAM, where the position map is itself stored in a sequence of smaller ORAMs.

\section{Creating multiple cards for the same household}
\label{append:whisper}
To provide robustness (\reqlinky{robust}) to our protocol, we provide each household with several smart cards. According to our previously described protocol~\ref{sec:sol}, we need the smart cards of the same household to share the same state. In our instantiation, there is no shared state between cards of the same household other than the household ID. Therefore, since the registration station is honest, it simply initiates the other cards as well, sending them the household ID. All these cards will write the same balance to the ORAM, but for efficiency the extra cards could omit this step.

%\end{appendices}
\end{document}